\newtheorem{theorem}{Theorem} 
\newtheorem{lemma}{Lemma} 
\newtheorem{proposition}{Proposition} 
\newtheorem{proof of lemma}{Proof of Lemma}
\newtheorem{proof of theorem}{Proof of Theorem}
\theoremstyle{definition}
\newtheorem{remark}{Remark}  
\newcommand{\R}{\mathbb{R}}
\numberwithin{equation}{section}
\theoremstyle{plain}
\begin{document}

\title{Affiliation networks with an increasing degree sequence}

\author{
Yong Zhang\thanks{Department of Statistics$^{*,\dag,\ddag,\S}$ and Hubei Key Laboratory of Mathematical Sciences$^{\ddag,\S}$,  Central China Normal University, Wuhan, 430079, China.
\texttt{Emails:} $^*$zhang$\_$yong@mails.ccnu.edu.cn,
$^\dag$qianxiaodi@mails.ccnu.edu.cn,
$^\ddag$qinhong@mail.ccnu.edu.cn,
$^\S$tingyanty@mail.ccnu.edu.cn.} 
\hspace{15mm}
Xiaodi Qian$^\dag$
\hspace{15mm}
Hong Qin$^\ddag$
 \hspace{15mm}
Ting Yan$^\S$
\\~~\\
Central China Normal University
}
\date{\empty}
\maketitle
\begin{abstract}

Affiliation network is one kind of two-mode social network with two different
sets of nodes (namely, a set of actors and a set of social events) and edges
representing the affiliation of the actors with the social events.
Although a number of statistical models are proposed to analyze affiliation networks,
the asymptotic behaviors of the estimator are still unknown or have not been properly explored.
In this paper, we study an affiliation model with the degree sequence as the exclusively natural sufficient statistic
in the exponential family distributions.
We establish the uniform consistency and asymptotic normality of the maximum likelihood estimator when the
numbers of actors and events both go to infinity.
Simulation studies and a real data example demonstrate our theoretical results.

\vskip 5 pt \noindent
\textbf{Key words}: Affiliation networks; Asymptotic normality; Consistency; Maximum likelihood estimators \\

{\noindent \bf Mathematics Subject Classification:} 	62E20, 62F12.
\end{abstract}

\vskip 5pt

Running title:  Affiliation networks

\section{Introduction}
Affiliation network is one kind of two-mode social network that consists of
two different types of sets of nodes, namely, a set of actors and a set of social events.
The network edges indicate the affiliation of actors with social events.
Such network data are commonly used to represent memberships between social organizations and their members,
for example, the affiliation of the researchers to the academic institutions or 
interlocking directors to companies or actors to movies. Other scenarios include ceremonial events attended by members
faculty sit, social events people attend, trade partners of major
oil exporting nations and so on.

In affiliation networks, the actors are brought together to jointly participate in social events.
Joint participation in events not only provides the opportunity for actors to interact,
but also increases the probability that links (e.g., friendship) between actors form.
For example, belonging to the same organizations (boards of directors,
political party, labor union, and so on) provides the opportunity for people to
meet and interact, and thus a link between individuals is more easily to form in these circumstances.
Similarly, when actors participate in more than one event, two events are connected through these actors.
There has been increasing interest in analyzing affiliated network data in recent years.
A number of approaches are proposed to analyze or model affiliation network data [e.g., \cite{conyon2004small}, \cite{robins2004small}, \cite{snijders2013model}].
\cite{iacobucci1990social} proposed a $p_2$ exponential family distribution using the degree sequence as the sufficient statistics
to model the weighted affiliation network, which is a close relative of $p_1$ model introduced  by \cite{Holland:Leinhardt:1981}.
\cite{latapy2008basic} extended the basic network statistics used to analyze one-mode networks to give a description
of analysis for two-mode networks systematically.
\cite{snijders2013model} proposed a stochastic actor-oriented model for the co-evolution of two-mode and one mode networks.
By extending exponential random graph models for the one-mode networks,
\cite{wang2009exponential} proposes a number of two-mode specifications as the sufficient statistics in exponential family graph models for
two-mode affiliation networks and compared the goodness of fit results obtained using the maximum likelihood and pseudo-likelihood
approaches by simulation. 

At present, little theoretical results are obtained in affiliation network models although
many properties of statistical models for one-mode networks are derived [e.g., \cite{Chatterjee:Diaconis:2013}, \cite{Shalizi:Rinaldo:2013}, \cite{Bhattacharyya2016}].
Even in the aforementioned simple \cite{iacobucci1990social}'s model by assuming
that all edges are independent,
the asymptotic properties of the maximum likelihood estimator (MLE) are still not addressed due to a growing dimension of parameter space.
In this paper, we study the asymptotic properties of the MLE in
an affiliation model with the degree sequence as the exclusively natural sufficient statistic
in the exponential family distributions.
This model is identical to the \cite{iacobucci1990social} model for unweighed edges (i.e., binary edges).
We establish the uniform consistency and asymptotic normality of the maximum likelihood estimator when the
number of actors and events both go to infinity. A key step to the proof is that we make use of
the approximate inverse of the Fisher information matrix with small approximation errors,
which is the extension of that used in \cite{yan2016asymptotics}.

The rest of this paper is organized as follows. We introduce the model in Section \ref{section 2}.
In Section \ref{section 3}, we present the asymptotic results including the uniform consistency
and asymptotic normality of the MLE. Simulation studies and a real application to the student extracurricular affiliation data
are given in Section \ref{section 4}.
Some further discussions are devoted to Section \ref{section 5}.
All the proofs are putted in the appendix.

\section{Model}
\label{section 2}

\subsection{Notations}

Let $A$ be an event set with $m$ events denoted by $\{1, \ldots, m\}$, and $P$ be an actor set with $n$ actors denoted by $\{1, \ldots, n\}$.
An affiliation network ${\cal{G}}(m,n)$ records the affiliation of each actor with each event in an affiliation matrix $X=(x_{i,j})_{m\times n}$;
$x_{i,j}=1$ if actor $i$ is affiliated with event $j$ and $x_{i,j}=0$ otherwise.
Each row of $X$ describes an actor's affiliation with the events and each column describes the memberships of the event.
In practice, $n$ is usually large and $m$ relatively small. Therefore, we assume $m\le n$ hereafter.
The affiliation network $\mathcal{G}(m,n)$ can also be represented in a bipartite graph, in which
the nodes are partitioned into two subsets for the actors and the events and the edges exist only between pairs of nodes belonging to different subsets.
In bipartite graphs, no two actors are adjacent and no two events are adjacent.
If pairs of actors are reachable, it is only via paths containing one or more events.
Similarly, if pairs of events are reachable, it is only via paths containing one or more actors.

Define $d_i=\sum_{j=1}^{n}x_{i,j}$ as the degree of vertex $i\in A$ and $\mathbf{d}=(d_1,\dots,d_m)^\top$.
Similarly, define $b_j=\sum_{i=1}^{m}x_{i,j}$ as the degree of vertex $j\in P$ and $\mathbf{b}=(b_1,\dots,b_n)^\top$.
The pair $\{\mathbf{d,b}\}$ is the degree sequence of the affiliation network ${\cal{G}}(m,n)$.

\subsection{An affiliation network model}
In this subsection, we present an exponential random bipartite graph model for
affiliation networks with the degree sequence as the exclusively natural sufficient statistic.
The probability mass function on the affiliation network $\mathcal{G}(m,n)$ is assumed to be of exponential form:
\begin{equation}\label{2.1}
P({\mathcal{G}(m,n)})=\exp({\boldsymbol{\alpha}}^\top \mathbf{d}+\boldsymbol{\beta}^\top \mathbf{b}-\mathrm{Z}(\boldsymbol{\alpha},\boldsymbol{\beta})),
\end{equation}
where $\mathrm{Z}(\boldsymbol{\alpha},\boldsymbol{\beta})$ is the normalizing constant,
$\boldsymbol{\alpha}=(\alpha_1,\dots,\alpha_m)^\top$ and $\boldsymbol{\beta}=(\beta_1,\dots,\beta_n)^\top$ are parameter vectors.
Each affiliation network with the same degree sequence is equally judged.
The parameter $\alpha_i$ quantifies the popularity of the event $i$ and $\beta_j$ quantifies
the activity of the actor $j$ to participate in events.
Note that
\[\exp({\boldsymbol{\alpha}}^{\top}\mathbf{d}+\boldsymbol{\beta}^{\top}\mathbf{b})=\exp\left(\sum^{m}_{i=1}\sum^{n}_{j=1}(\alpha_i+\beta_j)x_{i,j}\right)
=\prod^{m}_{i=1}\prod^{n}_{j=1}\exp((\alpha_i+\beta_j)x_{i,j}),
\]
which implies that the $mn$ random variables $x_{i,j}$ are mutually independent and $\mathrm{Z}(\boldsymbol{\alpha},\boldsymbol{\beta})$ can be expressed as
\[
\mathrm{Z}(\boldsymbol{\alpha},\boldsymbol{\beta})=\sum_{i=1}^m\sum_{j=1}^n
\log( 1 + \exp(\alpha_i+\beta_j) ).
\]
Therefore, $x_{i,j}$, $1\leq i\leq m,1\leq j\leq n$ are mutually independent Bernoulli random variables with the success probability:
\begin{equation}\label{eq:probability}
{\mathbb{P}}(x_{ij}=1)=\frac{e^{\alpha_{i}+\beta_{j}}} {1+e^{\alpha_{i}+\beta_{j}}},
\end{equation}
which is the \cite{iacobucci1990social} model for affiliated networks with binary edges.

Since the sample is just one realization of the bipartite random graph, the density or probability mass function \eqref{2.1} is also the likelihood function.
If one transforms $(\boldsymbol{\alpha},\boldsymbol{\beta})$ to $(\boldsymbol{\alpha}-c,\boldsymbol{\beta}+c)$, the likelihood does not change. Following \cite{yan2016asymptotics}, we set $\beta_n=0$ for the identifiability of the parameter.

The consistency and asymptotic normality of the MLE in the exponential random graph model with
the degree sequence for undirected one-model networks or the bi-degree sequence for directed one-mode networks
have been established recently [\cite{chatterjee2011}, \cite{hillar2013maximum}, \cite{yan2013central}, \cite{yan2016asymptotics},
\cite{Yan:Qin:Wang:2016}].
For the bipartite networks, the asymptotic theory for the MLE in the model \eqref{2.1} has not been explored.
The model is closely related to the Rasch model [\cite{Rasch:1960}] for dichotomous item response experiments, which assumes that
item $i$ correctly gives a response to subject $j$ with probability $\exp( \alpha_i - \beta_j )/(1+\exp( \alpha_i - \beta_j))$.
By assuming that all the parameters in the Rasch model are bounded, \cite{Haberman:1977}
proved the consistency and asymptotic normality of the MLE when the number of items and subjects goes to infinity simultaneously.

\section{Asymptotic results} \label{section 3}

Let ${\boldsymbol{\theta}}=(\alpha_1, \ldots, \alpha_m, \beta_1, \ldots, \beta_{n-1})^\top$ and $\mathbf{g}=(d_1, \ldots, d_m, b_1, \ldots, b_{n-1})^\top$.
The log-likelihood function is
\[
\ell({\boldsymbol{\theta}})=\sum^{m}_{i=1}\alpha_{i}d_{i}+\sum^{n-1}_{j=1}\beta_{j}b_{j}-\sum_{i=1}^{m}\sum_{j=1}^{n}\log(1+ e^{\alpha_{i} + \beta_{j}}).
\]
The likelihood equations are:
\begin{equation}\label{eq:2.2}\setlength\arraycolsep{2pt}
\begin{array}{rcl}
d_i&=&\sum_{j=1}^{n}\frac{e^{\alpha_{i} + \beta_{j}}} {1+e^{\alpha_{i} + \beta_{j}}},~~~i=1, \ldots, m,\\
b_j&=&\sum_{i=1}^{m}\frac{e^{\alpha_{i} + \beta_{j}}} {1+e^{\alpha_{i} + \beta_{j}}},~~~j=1, \ldots, n-1.
\end{array}
\end{equation}
Let $\widehat{{\boldsymbol{\theta}}}=(\widehat{\alpha}_{1},\dots,\widehat{\alpha}_{m},\widehat{\beta}_{1},\cdots,\widehat{\beta}_{n-1})^\top$ be the MLE of ${\boldsymbol{\theta}}$ and $\widehat{\beta}_{n}=0$. If $\widehat{{\boldsymbol{\theta}}}$ exists, then it is the solution to
the system of equation \eqref{eq:2.2}.

Let $V=(v_{i,j})$ be the Fisher information matrix of the parameter vector $\boldsymbol{\theta}$, which is a diagonal dominant matrix with nonnegative entries.
The diagonal elements of $V$ are
\[
v_{i,i}=\sum^{n}_{j=1}\frac{e^{\alpha_{i}+\beta_{j}}}{(1+e^{\alpha_{i}+\beta_{j}})^2},~i = 1,\dots, m,~~
v_{m+j, m+j}=\sum^{m}_{i=1}\frac{e^{\alpha_{i}+\beta_{j}}}{(1+e^{\alpha_{i}+\beta_{j}})^2},~j = 1,\dots, n.
\]
Motivated by the approximate inverse proposed by \cite{yan2016asymptotics} for the Fisher information matrix in the
directed one-mode network model involved with the bi-degree sequence, we proposed a generalized simple matrix $S=(s_{i,j})$ to approximate the $V^{-1}$, which is defined as
\begin{equation}\label{eq:Smatrix}
s_{i,j}=
\left\{
\begin{array}{ll}
\frac{\delta_{i,j}}{v_{i,i}}+\frac{1}{v_{m+n,m+n}}, & {i,j=1,\dots,m,} \\
-\frac{1}{v_{m+n,m+n}}, & {i=1,\dots,m,~j=m+1,\dots,m+n-1,}  \\
-\frac{1}{v_{m+n,m+n}}, & {i=m+1,\dots,m+n-1,j=1,\dots,m,}\\
\frac{\delta_{i,j}}{v_{i,i}}+\frac{1}{v_{m+n,m+n}}, & {i,j=m+1,\dots,m+n-1,}
  \end{array}
\right.
\end{equation}
where $\delta_{i,j}=1$ when $i=j$ and $0$ otherwise. 
For a vector $\mathbf{x}=(x_1,\dots,x_n)^{\top}\in \R^n$,
denote the $\ell_{\infty}$ norm of $\mathbf{x}$ by $\|\mathbf{x}\|_{\infty}=\max_{1\leq i\leq n}| x_i |$.
We present the consistency of $\widehat{{\boldsymbol{\theta}}}$ here, whose proof is given in Appendix A.

\begin{theorem}\label{Theorem 1}
Assume that ${{\boldsymbol{\theta}}}^*\in \mathbb{R}^{m+n-1}$ with $\| {{\boldsymbol{\theta}}}^*\|_{\infty}\leq \tau\log n$, where $\tau\in (0, 1/24)$ is a constant, and that $X \sim \mathbb{P}_{{{\boldsymbol{\theta}}}^*}$, where $\mathbb{P}_{{{\boldsymbol{\theta}}}^*}$ denotes the probability distribution (\ref{2.1}) on $X$ under the parameter ${{\boldsymbol{\theta}}}^*$.
If $m/n=O(1)$, then as $n$ goes to infinity, with probability approaching one, the MLE $\widehat{{\boldsymbol{\theta}}}$
exists and satisfies
\[\|{\widehat{{\boldsymbol{\theta}}}}- {{\boldsymbol{\theta}}}^*\|_{\infty}=O_p\left(\frac{(\log n )^{1/2}e^{6\parallel {{\boldsymbol{\theta}}}^*\parallel_{\infty}}}{n^{1/2}}\right)=o_p(1). \]
Further, if the MLE exists, it is unique.
\end{theorem}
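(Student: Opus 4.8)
The estimating equations \eqref{eq:2.2} identify the MLE as a zero of the score map $F(\boldsymbol{\theta}):=\mathbb{E}_{\boldsymbol{\theta}}[\mathbf{g}]-\mathbf{g}$, whose Jacobian is precisely the Fisher information matrix $V(\boldsymbol{\theta})$. The plan is a Newton--Kantorovich argument carried out on the $\ell_\infty$ scale, in the spirit of \cite{yan2016asymptotics}: show that with probability tending to one the score at the truth is small, that $V(\boldsymbol{\theta})$ is uniformly well approximated by the explicit matrix $S$ of \eqref{eq:Smatrix}, and that $\boldsymbol{\theta}\mapsto V(\boldsymbol{\theta})$ varies slowly, so that a Newton iteration started at $\boldsymbol{\theta}^*$ converges to a zero $\widehat{\boldsymbol{\theta}}$ of $F$ inside a small $\ell_\infty$-ball around $\boldsymbol{\theta}^*$.

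First I would control the random input $\|F(\boldsymbol{\theta}^*)\|_\infty=\|\mathbb{E}_{\boldsymbol{\theta}^*}[\mathbf{g}]-\mathbf{g}\|_\infty$. Each coordinate of $\mathbf{g}$ is a sum of independent Bernoulli variables ($d_i$ over the $n$ actors, $b_j$ over the $m$ events), so Hoeffding's inequality and a union bound over the $m+n-1=O(n)$ coordinates give $\|F(\boldsymbol{\theta}^*)\|_\infty=O_p((n\log n)^{1/2})$. In parallel I would record the deterministic size of $V$: since $\|\boldsymbol{\theta}^*\|_\infty\le\tau\log n$ forces $|\alpha_i^*+\beta_j^*|\le 2\tau\log n$, each weight $p_{ij}(1-p_{ij})$ lies in $[c\,n^{-2\tau},1/4]$, so the diagonal entries satisfy $v_{i,i}\gtrsim n^{1-2\tau}$ and $v_{m+n,m+n}\gtrsim n^{1-2\tau}$ (using $m/n=O(1)$).

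The deterministic core is the approximation quality of $S$. Because $V(\boldsymbol{\theta})$ has a diagonally dominant block form with diagonal entries $v_{i,i}$ and off-diagonal couplings $p_{ij}(1-p_{ij})$, I would prove, extending the inverse-approximation bound of \cite{yan2016asymptotics} to the present bipartite matrix, that the entrywise error $\|V^{-1}-S\|_{\max}$ is negligible relative to $1/\min_i v_{i,i}$. Combined with the concentration bound this gives $\|SF(\boldsymbol{\theta}^*)\|_\infty\lesssim\|F(\boldsymbol{\theta}^*)\|_\infty/\min_i v_{i,i}$, which controls the size of the first Newton correction. I would then analyse the iteration $\Phi(\boldsymbol{\theta})=\boldsymbol{\theta}-V(\boldsymbol{\theta})^{-1}F(\boldsymbol{\theta})$ coordinatewise on a ball of radius $r\asymp e^{6\|\boldsymbol{\theta}^*\|_\infty}(\log n/n)^{1/2}$: the initial displacement is of the order just bounded, while the contraction factor, controlled through $\|V^{-1}-S\|_{\max}$ together with the Lipschitz continuity of $V$ (bounds on the third derivatives $\partial^3 Z/\partial\theta^3$, which are sums of $p_{ij}(1-p_{ij})(1-2p_{ij})$ terms), stays below $1$. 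A fixed point of $\Phi$ is a zero of $F$, hence the MLE, so existence and the stated rate follow simultaneously on the high-probability event of Step 1.

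The main obstacle is this balancing, and crude operator-norm Kantorovich bounds are too lossy for it, which is exactly why the explicit $S$ and a coordinatewise treatment are needed. Each of the inverse-weight and Lipschitz factors entering the error carries a power of $e^{2\|\boldsymbol{\theta}^*\|_\infty}=n^{2\tau}$; three such factors compound into the $e^{6\|\boldsymbol{\theta}^*\|_\infty}$ appearing in the rate, which is already $o_p(1)$ for $\tau<1/12$, but the self-map and contraction requirement demands an additional factor of the same size to vanish, i.e.\ $n^{12\tau-1/2}(\log n)^{1/2}=o(1)$, and this is what forces the sharper range $\tau\in(0,1/24)$. Finally, uniqueness is immediate once existence holds: $-\nabla^2\ell(\boldsymbol{\theta})=V(\boldsymbol{\theta})$ is positive definite for every $\boldsymbol{\theta}$, because $\mathrm{Var}_{\boldsymbol{\theta}}\big(\sum_{i} a_i d_i+\sum_{j<n} c_j b_j\big)=\sum_{i,j}(a_i+c_j)^2 p_{ij}(1-p_{ij})$, with the convention $c_n=0$, vanishes only when all $a_i$ and $c_j$ are zero (the constraint $\beta_n=0$ removing the translation invariance); hence $\ell$ is strictly concave and any stationary point is its unique global maximiser.
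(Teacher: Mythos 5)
Your proposal is correct and follows essentially the same route as the paper's own proof: Hoeffding concentration of the degree sequence with a union bound, entrywise approximation of $V^{-1}$ by the explicit matrix $S$ (extending Proposition 1 of \cite{yan2016asymptotics}), and a Newton--Kantorovich iteration in the $\ell_\infty$ norm started at $\boldsymbol{\theta}^*$, with exactly the paper's exponent bookkeeping (the factor $Q^2/q^3 \asymp e^{6\|\boldsymbol{\theta}^*\|_\infty}$ producing the rate, and the contraction requirement $n^{12\tau-1/2}(\log n)^{1/2}=o(1)$ forcing $\tau<1/24$). The only place you diverge is uniqueness, which you prove directly from strict concavity of $\ell$ via the variance identity $\mathrm{Var}\bigl(\sum_i a_i d_i+\sum_{j<n}c_j b_j\bigr)=\sum_{i,j}(a_i+c_j)^2\,p_{ij}(1-p_{ij})$ with $c_n=0$, whereas the paper cites Proposition 5 of \cite{yan2016asymptotics}; your argument is valid and has the advantage of being self-contained.
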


Next, we present the central limit theorem of $\widehat{{\boldsymbol{\theta}}}$, whose proof is given in Appendix B.

\begin{theorem}\label{Theorem:central}
Assume that $X \sim \mathbb{P}_{{{\boldsymbol{\theta}}}^*}$. If  $m/n=O(1)$ and $\| {{\boldsymbol{\theta}}}^*\|_{\infty}\leq \tau\log n$, where $\tau \in (0,1/36)$ then for any fixed $k\geq 1$, as $n\rightarrow \infty$, the vector consisting of the first $k$ elements of $(\widehat{{\boldsymbol{\theta}}}-{{\boldsymbol{\theta}}}^*)$ is asymptotically multivariate normal with mean zero and covariance matrix given by the upper left $k\times k$ block of $S^{*}$.
where $S^{*}$ is the matrix by replacing ${{\boldsymbol{\theta}}}$ in $S$ given in \eqref{eq:Smatrix} with its true value ${{\boldsymbol{\theta}}}^*$.
\end{theorem}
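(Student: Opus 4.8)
The plan is to establish a linear (Bahadur-type) representation $\widehat{\boldsymbol\theta}-\boldsymbol\theta^*=S(\mathbf g-\E_{\boldsymbol\theta^*}\mathbf g)+o_p(n^{-1/2})$ in each of the fixed first $k$ coordinates, and then apply a central limit theorem for weighted sums of independent Bernoulli variables to the explicit leading term $S(\mathbf g-\E_{\boldsymbol\theta^*}\mathbf g)$, whose covariance is computable in closed form from \eqref{eq:Smatrix}.

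First I would set up the expansion. Writing $\boldsymbol\pi(\boldsymbol\theta)=\E_{\boldsymbol\theta}[\mathbf g]$, the likelihood equations \eqref{eq:2.2} state exactly that $\mathbf g=\boldsymbol\pi(\widehat{\boldsymbol\theta})$, so the centred statistic is $\mathbf g-\boldsymbol\pi(\boldsymbol\theta^*)=\boldsymbol\pi(\widehat{\boldsymbol\theta})-\boldsymbol\pi(\boldsymbol\theta^*)$. Since the model is a regular exponential family, $\boldsymbol\pi$ is analytic with Jacobian $\partial\boldsymbol\pi/\partial\boldsymbol\theta=V(\boldsymbol\theta)$, and a second-order Taylor expansion about $\boldsymbol\theta^*$ yields $\mathbf g-\boldsymbol\pi(\boldsymbol\theta^*)=V(\boldsymbol\theta^*)(\widehat{\boldsymbol\theta}-\boldsymbol\theta^*)+\mathbf R$, where $R_a$ is built from the third log-partition derivatives $p_{i,j}(1-p_{i,j})(1-2p_{i,j})$ and satisfies $|R_a|\lesssim v_{a,a}\|\widehat{\boldsymbol\theta}-\boldsymbol\theta^*\|_\infty^2$. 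Inverting and inserting the explicit matrix $S$ gives the exact decomposition
\[
\widehat{\boldsymbol\theta}-\boldsymbol\theta^*=S(\mathbf g-\boldsymbol\pi(\boldsymbol\theta^*))+\big(V(\boldsymbol\theta^*)^{-1}-S\big)(\mathbf g-\boldsymbol\pi(\boldsymbol\theta^*))-V(\boldsymbol\theta^*)^{-1}\mathbf R,
\]
so it remains to show that the last two vectors are $o_p(n^{-1/2})$ coordinatewise.

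The main obstacle is controlling these remainders. For $i\le k\le m$, the $i$-th coordinate of $V(\boldsymbol\theta^*)^{-1}\mathbf R$ equals, up to higher-order terms, $[S\mathbf R]_i$, which by \eqref{eq:Smatrix} splits into a diagonal part $R_i/v_{i,i}\lesssim\|\widehat{\boldsymbol\theta}-\boldsymbol\theta^*\|_\infty^2$ and an off-diagonal part $v_{m+n,m+n}^{-1}\big(\sum_{l=1}^m R_l-\sum_{j=1}^{n-1}R_{m+j}\big)$. The key structural observation that keeps the off-diagonal part from blowing up is that $\sum_{l=1}^m\pi_l(\boldsymbol\theta)-\sum_{j=1}^{n-1}\pi_{m+j}(\boldsymbol\theta)=\sum_{l=1}^m p_{l,n}(\boldsymbol\theta)$, the expected degree of the reference actor $n$; hence $\sum_l R_l-\sum_j R_{m+j}$ is itself the second-order remainder of this single scalar and is $\lesssim v_{m+n,m+n}\|\widehat{\boldsymbol\theta}-\boldsymbol\theta^*\|_\infty^2$, so the off-diagonal part is also $\lesssim\|\widehat{\boldsymbol\theta}-\boldsymbol\theta^*\|_\infty^2$. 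Substituting the consistency rate of Theorem~\ref{Theorem 1}, which carries the factor $e^{6\|\boldsymbol\theta^*\|_\infty}=n^{6\tau}$, the dominant remainder is of order $e^{6\|\boldsymbol\theta^*\|_\infty}\|\widehat{\boldsymbol\theta}-\boldsymbol\theta^*\|_\infty^2=O_p(n^{18\tau-1}\log n)$, which is $o_p(n^{-1/2})$ exactly when $18\tau<1/2$, i.e.\ $\tau<1/36$; this is where the sharper constant is used, as opposed to the $\tau<1/24$ that suffices for consistency. The approximation-error term is treated analogously, using the appendix bound on $\|V(\boldsymbol\theta^*)^{-1}-S\|_{\max}$ (the bipartite extension of the inverse-approximation lemma of \cite{yan2016asymptotics}) together with the concentration bound $\|\mathbf g-\boldsymbol\pi(\boldsymbol\theta^*)\|_\infty=O_p((n\log n)^{1/2})$.

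Finally I would establish the limit law of the leading term and identify its covariance. By \eqref{eq:Smatrix}, for $i\le k$,
\[
[S(\mathbf g-\boldsymbol\pi)]_i=\frac{d_i-\pi_i}{v_{i,i}}+\frac{1}{v_{m+n,m+n}}\Big(\sum_{l=1}^m(d_l-\pi_l)-\sum_{j=1}^{n-1}(b_j-\pi_{m+j})\Big),
\]
which, after inserting $d_l-\pi_l=\sum_j(x_{l,j}-p_{l,j})$ and $b_j-\pi_{m+j}=\sum_l(x_{l,j}-p_{l,j})$, is a linear combination of the $mn$ independent, bounded, mean-zero variables $x_{l,j}-p_{l,j}$. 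I would then invoke the Cram\'er--Wold device: for fixed constants $c_1,\dots,c_k$ the scalar $\sum_{i=1}^k c_i[S(\mathbf g-\boldsymbol\pi)]_i$ is again such a sum, and its standardized version obeys the Lyapunov condition because each summand is of order $\max_i v_{i,i}^{-1}\to0$ relative to the standard deviation of the whole sum, giving asymptotic normality of the first $k$ coordinates. Since $\mathrm{Var}_{\boldsymbol\theta^*}(\mathbf g-\boldsymbol\pi)=V(\boldsymbol\theta^*)$, the leading term has covariance $SV(\boldsymbol\theta^*)S$, whose upper-left $k\times k$ block converges, via $S=V(\boldsymbol\theta^*)^{-1}+(S-V(\boldsymbol\theta^*)^{-1})$ and the same $\|V^{-1}-S\|_{\max}$ bound, to the top-left $k\times k$ block of $S$ evaluated at $\boldsymbol\theta^*$, namely that of $S^*$. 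Combining this with the negligibility of the remainders and Slutsky's theorem completes the proof.
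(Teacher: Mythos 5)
Your proposal follows essentially the same route as the paper's Appendix B: the same Taylor expansion $\mathbf{g}-\E\mathbf{g}=V(\widehat{\boldsymbol{\theta}}-\boldsymbol{\theta}^*)+\mathbf{h}$ (your $\mathbf{R}$ is the paper's $\mathbf{h}$), the same three-term decomposition
\[
\widehat{\boldsymbol{\theta}}-\boldsymbol{\theta}^* \;=\; S(\mathbf{g}-\E\mathbf{g})\;+\;(V^{-1}-S)(\mathbf{g}-\E\mathbf{g})\;+\;V^{-1}\mathbf{h},
\]
the same structural identity $\sum_{l=1}^m h_l-\sum_{j=1}^{n-1}h_{m+j}=\sum_{l=1}^m h_{l,n}$ for the phantom $(m+n)$-th coordinate, the same CLT for the leading term (Proposition~\ref{proposition 2}, which your Cram\'er--Wold/Lindeberg sketch fills in), and you correctly locate the binding constraint $18\tau<1/2$, i.e.\ $\tau<1/36$, in the term coupling $\|V^{-1}-S\|$ with the quadratic Taylor remainder. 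Your treatment of $V^{-1}\mathbf{h}$ is sound (your bound $|h_i|\lesssim v_{i,i}\widehat{\rho}_n^2$ is in fact slightly sharper than the paper's $|h_i|\le 2n\widehat{\rho}_n^2$), because for that term a crude entrywise bound suffices: $\mathbf{h}$ is already quadratically small.

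There is, however, a genuine gap in your treatment of the approximation-error term $[(V^{-1}-S)(\mathbf{g}-\E\mathbf{g})]_i$. You propose to bound it by combining the entrywise bound on $V^{-1}-S$ with the concentration bound $\|\mathbf{g}-\E\mathbf{g}\|_\infty=O_p((n\log n)^{1/2})$. Lemma~\ref{lemma:appro} gives $\max_{i,j}|(V^{-1}-S)_{i,j}|=O(e^{6\|\boldsymbol{\theta}^*\|_\infty}/(mn))$, so summing over the $m+n-1$ columns this strategy yields, for $m\asymp n$,
\[
\bigl|[(V^{-1}-S)(\mathbf{g}-\E\mathbf{g})]_i\bigr|
\;\le\;(m+n-1)\cdot O\Bigl(\tfrac{e^{6\|\boldsymbol{\theta}^*\|_\infty}}{mn}\Bigr)\cdot O_p\bigl((n\log n)^{1/2}\bigr)
\;=\;O_p\Bigl(e^{6\|\boldsymbol{\theta}^*\|_\infty}\sqrt{\tfrac{\log n}{n}}\Bigr),
\]
which is \emph{not} $o_p(n^{-1/2})$: even for bounded parameters ($\tau=0$) the ratio to $n^{-1/2}$ is $(\log n)^{1/2}\to\infty$, and for $\tau>0$ it is worse by the factor $n^{6\tau}$. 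The triangle inequality discards the cancellation in the mean-zero weighted sum $\sum_j (V^{-1}-S)_{i,j}(g_j-\E g_j)$, and that cancellation is exactly what is needed. The paper's remedy is a second-moment argument: Lemma~\ref{lemma 6} computes $\mathrm{Cov}[(V^{-1}-S)(\mathbf{g}-\E\mathbf{g})]=(V^{-1}-S)V(V^{-1}-S)^\top=(V^{-1}-S)-S(I-VS)$, bounds its entries by $O(e^{6\|\boldsymbol{\theta}^*\|_\infty}/(mn))$ using the bound on $S(I-VS)$ from the proof of Lemma~\ref{lemma:appro}, and then applies Chebyshev's inequality to conclude $[(V^{-1}-S)(\mathbf{g}-\E\mathbf{g})]_i=O_p(e^{3\|\boldsymbol{\theta}^*\|_\infty}/(mn)^{1/2})=o_p(n^{-1/2})$. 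You need this variance computation (or an equivalent exploitation of cancellation) to close the proof; as written, your bound on this term fails and the asymptotic normality does not follow.
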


\begin{remark}\label{remark 1}
By Theorem \ref{Theorem:central}, for any fixed $i$, as $n\rightarrow \infty$ , the convergence rate of ${\widehat{{{\theta}}}}_i$ is $1/v_{i,i}^{1/2}$. Since $me^{-2\parallel{{\boldsymbol{\theta}}^{*}}\parallel_{\infty}}\leq v_{i,i}\leq n/4$, the rate of convergence is between $O(m^{-1/2}e^{\| {{\boldsymbol{\theta}}}^*\|_{\infty}})$ and $O(n^{-1/2})$.
\end{remark}

\section{Simulation studies}\label{section 4}
We carry out the numerical simulations to evaluate Theorem \ref{Theorem:central}.
Following \cite{yan2016asymptotics}, the parameter values take a linear form.
Specifically, we set $\alpha_{i+1}^{*}=(m-1-i)L/(m-1)$ for $i=0,1,\dots,m-1$,
and considered four different values $0$, $\log(\log(m))$, $\log(m)^{1/2}$ and $\log(m)$ for $L$.
For the parameter vector $\boldsymbol{\beta}$, let $\beta_{j+1}^{*}=(n-1-i)L/(n-1), j=0,1,\dots,n-2$ for simplicity and $\beta_{n}^{*}=0$ by default.

By Theorem \ref{Theorem:central}, $\hat{\xi}_{i,j}=[\hat{\alpha}_{i}-\hat{\alpha}_{j}-(\alpha^{*}_i-\alpha^{*}_j)]/(1/\hat{v}_{i,i}+1/\hat{v}_{j,j})^{1/2}$,
$\hat{\eta}_{i,j}=[\hat{\beta}_{i}-\hat{\beta}_{j}-(\beta^{*}_i-\beta^{*}_j)]/(1/\hat{v}_{m+i,m+i}+1/\hat{v}_{m+j,m+j})^{1/2}$, are asymptotically distributed as standard normal distribution, where $\hat{v}_{i,i}$ is the estimate
of ${v}_{i,i}$ by replacing $\boldsymbol{\theta}$ with $\boldsymbol{\widehat{\theta}}$.
We assess the asymptotic normality of $\hat{\xi}_{i,j},~\hat{\eta}_{i,j}$ using the quantile-quantile (QQ) plot.
Further, we record the coverage probability of the
$95\%$ confidence interval, the length of the confidence interval, and the frequency that the MLE does not exist.
Each simulation is repeated $10,000$ times.

We only simulate a single combination for $(m,n)$ with $m=100,n=200$, and present the QQ-plots of $\hat{v}_{ii}^{-1/2}(\hat{\alpha}_i-\alpha_i)$ and $\hat{v}_{ii}^{-1/2}(\hat{\beta}_i-\beta_i)$ in Figure \ref{figure-discrete-qq,alpha} and  Figure \ref{figure-discrete-qq,beta}, respectively.
The horizontal and vertical axes are the theoretical and empirical quantiles, respectively, and the
straight lines correspond to the reference line $y=x$.
In Figure \ref{figure-discrete-qq,alpha}, we can see that the empirical quantiles coincide with the theoretical ones very well. In Figure \ref{figure-discrete-qq,beta}, there are slight deviations when $L=( \log m)^{1/2}$.
When $L=(\log n)^{1/2}$, there are a little derivations on both tails of plots.
When $L=\log n$, the MLE does not exist in all repetitions, the QQ plots are not available in this case.

The coverage probability of the $95\%$ confidence interval for $\alpha_i-\alpha_j$ and $\beta_i-\beta_j$, the length of the confidence interval, and the frequency that the MLE did not exist, which are reported in Table \ref{Table 1:Estimatation of alpha} and Table \ref{Table 2:Estimatation of beta}, respectively. There are similar results for the two tables.
We can see that the length of estimated confidence interval increases as $L$ increases for fixed $n$, and decreases as $n$ increases for fixed $L$.
The coverage frequencies are all close to the nominal level $95\%$.
When $L=(\log n)^{1/2}$ (conditions in Theorem \ref{Theorem:central} no longer hold),
the MLE  does not exist with a positive probability; when $L=\log(n)$,
the MLE did not exist with $100\%$ frequencies.

{\it A data example}. We analyze a student extracurricular affiliation network data collected by Dan McFarland in 1996, which can be downloaded from \url{http://dl.dropbox.com/u/25710348/snaimages/mag_act96.txt}. It consists of $1295$ students (anonymized) and  $91$  student organizations in which they are members (e.g. National Honor Society, wrestling team, cheerleading squad, etc.).
In order to guarantee the existence of the MLE, we remove those $438$ individuals that don't belong to any organizations.
The MLEs of the parameters for remaining students and organizations and their standard errors as well as the $95\%$ confidence intervals  are reported in Tables \ref{Table 3:estimated influence parameters} and  \ref{Table 4:estimated influence parameters}.
The value of estimated parameters reflect the size of degrees.
For example, the largest five degrees in student organizations are $199, 157, 124, 93, 89$ for organizations ``Spanish.Club, Pep.Club, NHS, Latin.Club, Orchestra.Symphonic",  which also have the top five influence parameters $-0.32, -0.64, -0.94, -1.92, -1.34$.
On the other hand, the organizations with the five smallest influence parameters
$-4.60, -4.60, -4.60, -4.89, -4.89$  have degrees $4, 4, 4, 3, 3$.

\section{Summary and discussion} \label{section 5}
Statistical models for affiliation networks provide insight into the formulation of complex social
affiliation between actors and events. They also indirectly reflect how events create ties among actors and
the actors create ties among events. Meanwhile, the asymptotic inference in these models are challenge
like other network models due to that the structure of the network data is non-standard.
In this paper, we derive the uniform consistency and asymptotic normality of the MLE
in the exponential random bipartite graph models for affiliation networks with the degree sequence as the exclusively sufficient statistic.
The conditions imposed on $\tau$ that guarantee the good asymptotic properties of the MLE may not be best possible.
In particular, the conditions guaranteeing the asymptotic normality are stronger than those guaranteeing the consistency.
Simulation studies suggest that the conditions on $\tau$ might be relaxed.
The asymptotic behavior of the MLE depends not only on $\tau$, but also on the configuration of all the parameters.
We will investigate this problem in the future work.

We only consider dyadic independence assumption.
Like the model specifications in the exponential random graph models for the one-mode network, one
can add the counts of $k$-stars of various sizes as the model terms in the model \eqref{2.1} to model
the dependence structure. However, such model incurs the degeneracy problem [e.g., \cite{Chatterjee:Diaconis:2013}]
in which the generated graphs are almost full or empty. To overcome this disadvantage,
\cite{wang2009exponential} proposes a number of new two-mode specifications such as the k-two-paths and three-path statistics.
Although \cite{wang2009exponential} show that these model specifications have good performance by simulations,
the theoretical properties of the model are still unknown.

\section*{Appendix}
In this appendix, we will present the proofs for Theorems \ref{Theorem 1} and \ref{Theorem:central}.
We start with some preliminaries.
For an $n\times n$ matrix $J=(J_{i,j})$, $\|J\|_{\infty}$ denotes the matrix norm induced by the $\|\cdot\|_{\infty}$-norm on vectors in $\mathbb{R}^n$:
\[
\|J\|_{\infty}=\max_{\mathbf{x}\neq 0}\frac{\|J\mathbf{x}\|_{\infty}}{\|\mathbf{x}\|_{\infty}}=\max_{1\leq i\leq n}\sum_{j=1}^{n}| J_{i,j}|.
\]
Let $D$ be an open convex subset of $\mathbb{R}^n$. We say an $n\times n$  function matrix $F(\mathbf{x})$ whose elements $F_{ij}(\mathbf{x})$ are functions on vectors $\mathbf{x}$, is Lipschitz continuous on $D$ if there exists a real number $\lambda$ such that for any $\mathbf{v}\in R^n$ and any $\mathbf{x,y}\in D,$
\[
\| F(\mathbf{x})(\mathbf{v})-F(\mathbf{y})(\mathbf{v})\|_{\infty}
\leq \lambda\| \mathbf{x-y} \|_{\infty} \| \mathbf{v} \|_{\infty},
\]
where $\lambda$ may depend on $n$ but independent of $\mathbf{x}$ and $\mathbf{y}$.
For fixed $n$, $\lambda$ is a constant.

We introduce a class of matrices.
Given two positive numbers $q, Q$, we say the $(m+n-1)\times(m+n-1)$ matrix $V=(v_{i,j})$ belongs to the class $ {\cal L}_{m,n}(q,Q)$ if the following holds:
\begin{equation}\label{eq:Lmatrix}
\begin{array}{l}%
q\leq v_{i,i}-\sum_{j=m+1}^{m+n-1}v_{i,j}\leq Q,i=1,\dots,m;v_{m,m}=\sum_{j=m+1}^{m+n-1}v_{m,j},\\
v_{i,j}=0,i,j=1,\dots,m,i\neq j,\\
v_{i,j}=0,i,j=m+1,\dots,m+n-1,i\neq j,\\
q\leq v_{i,j}=v_{j,i}\leq Q,i=1,\dots,m,j=m+1,\dots,m+n-1,\\
v_{i,i}=\sum_{k=1}^{m}v_{k,i}=\sum_{k=1}^{m}v_{i,k},i=m+1,\dots,m+n-1.
\end{array}
\end{equation}
If $V\in{\cal L}_{m,n}(q,Q)$, then $V$ is a $(m+n-1)\times(m+n-1)$ diagonally dominant, symmetric
nonnegative matrix. Define $v_{m+n,i}=v_{i,m+n}:=v_{i,i}-\sum_{j=1}^{m+n-1}v_{i,j}$ for $i=1,\dots,m+n-1$ and $v_{m+n,m+n}=\sum_{i=1}^{m+n-1}v_{m+n,i}.$ Then $q\leq v_{m+n,i}\leq Q$ for $i=1,\dots,m,v_{m+n,i}=0$ for $i=m,m+1,\dots,m+n-1$ and $v_{m+n,m+n}=\sum_{i=1}^{m}v_{i,m+n}=\sum_{i=1}^{m}v_{m+n,i}.$
The Fisher information matrix of the parameter vector $\boldsymbol{\theta}$, $V$,
belongs to the matrix class $V\in{\cal L}_{m,n}(q,Q)$.
The approximate error using $S$ in \eqref{eq:Smatrix} to approximate the inverse of $V$
is given in the lemma below, whose proof is the extension of that for Proposition 1 in \cite{yan2016asymptotics}.

\begin{lemma}\label{lemma:appro}
If $V\in{\cal L}_{m,n}(q,Q)$ with $Q/q=o(n)$ and $m/n=O(1)$, then for large enough $n$,
\[
\| V^{-1}-S \| \leq \frac{c_1Q^2}{q^3mn},
\]
where $c_1$ is a constant that dose not depend on $q, Q, m$ and $n$, and
$\| A \|:= \max_{i,j}|a_{i,j}|$ for a general matrix $A=(a_{i,j})$.
\end{lemma}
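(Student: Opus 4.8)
The plan is to follow and extend the argument for Proposition~1 of \cite{yan2016asymptotics}, exploiting the explicit structure of matrices in $\mathcal{L}_{m,n}(q,Q)$ rather than generic norm bounds. First I would record that $S$ has the compact form $S=\mathrm{diag}(v_{1,1}^{-1},\dots,v_{m+n-1,m+n-1}^{-1})+v_{m+n,m+n}^{-1}\,\mathbf{e}\mathbf{e}^{\top}$, where $\mathbf{e}=(1,\dots,1,-1,\dots,-1)^{\top}$ carries $m$ entries $+1$ and $n-1$ entries $-1$, i.e.\ the sign vector of the two node classes. The two structural identities I would extract from \eqref{eq:Lmatrix} are $V\mathbf{e}=\mathbf{c}$ and $\mathbf{e}^{\top}\mathbf{c}=v_{m+n,m+n}$, where $\mathbf{c}=(v_{m+n,1},\dots,v_{m+n,m+n-1})^{\top}$ is the vector of couplings to the deleted reference node. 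The first is the restriction to the first $m+n-1$ coordinates of the fact that $(\mathbf{1}_m,-\mathbf{1}_n)^{\top}$ lies in the kernel of the full $(m+n)\times(m+n)$ information matrix, and by \eqref{eq:Lmatrix} the vector $\mathbf{c}$ is supported on the event indices with entries in $[q,Q]$; the second is exactly the relation $v_{m+n,m+n}=\sum_{i=1}^{m}v_{m+n,i}$ stated after \eqref{eq:Lmatrix}.

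Using these identities I would compute $R:=VS-I$ entrywise, obtaining $R_{i,j}=v_{i,j}(1-\delta_{i,j})/v_{j,j}+c_{i}e_{j}/v_{m+n,m+n}$. Since $v_{i,j}\le Q$, each $v_{i,i}\ge(m-1)q$, and $v_{m+n,m+n}\ge(m-1)q$, the hypotheses $q\le v_{i,j},c_{i}\le Q$ and $m/n=O(1)$ give $\|R\|=O(Q/(qn))$. The leading term of the error is $SR$, whose entries I would evaluate explicitly: the four products expand and the two purely rank-one pieces cancel by virtue of $\mathbf{e}^{\top}\mathbf{c}=v_{m+n,m+n}$, leaving
\[
(SR)_{i,j}=\frac{v_{i,j}(1-\delta_{i,j})}{v_{i,i}v_{j,j}}+\frac{c_{i}e_{j}}{v_{i,i}v_{m+n,m+n}}+\frac{e_{i}c_{j}}{v_{m+n,m+n}v_{j,j}}.
\]
Each surviving term is $O(Q/(q^{2}mn))$ (the factor $v_{i,j}\neq 0$ forces a cross-block pair, so one of $v_{i,i},v_{j,j}$ is of order $nq$ and the other of order $mq$), hence $\|SR\|=O(Q/(q^{2}mn))$, already inside the claimed bound.

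To pass from $SR$ to $V^{-1}-S$ I would use the identity $V^{-1}-S=-SR-(V^{-1}-S)R$ and bound the remainder. This is the main obstacle. The matrix $R$ is \emph{not} a contraction in the natural matrix norms: its absolute column sums are $\Theta(1)$, reflecting that $V$ possesses a single nearly-null direction, namely $\mathbf{e}$ itself, since $\mathbf{e}^{\top}V\mathbf{e}=\mathbf{e}^{\top}\mathbf{c}=v_{m+n,m+n}$ is of order $mq$ while the bulk eigenvalues of $V$ are of order $nq$ (so the condition number is of order $n$). Consequently a naive Neumann/geometric-series estimate diverges, and one cannot simply multiply $\|SR\|$ by a column-sum of $R$.

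The resolution is that this ill-conditioned direction is precisely the one absorbed by the rank-one term $v_{m+n,m+n}^{-1}\mathbf{e}\mathbf{e}^{\top}$ of $S$; once it is removed, $V$ is diagonally dominant with spectral gap of order $nq$ on the complementary subspace $\mathbf{e}^{\perp}$, where the remainder can be controlled. Carrying this out amounts to splitting $R$ into its rank-one part $v_{m+n,m+n}^{-1}\mathbf{c}\mathbf{e}^{\top}$, which is handled exactly through $V\mathbf{e}=\mathbf{c}$ and $\mathbf{e}^{\top}\mathbf{c}=v_{m+n,m+n}$ (these telescope rather than accumulate), and its normalized off-diagonal part $(V-\mathrm{diag}(v_{i,i}))\,\mathrm{diag}(v_{i,i})^{-1}$, which is controlled using positive-definiteness on $\mathbf{e}^{\perp}$. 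This produces one further factor $Q/q$ beyond $\|SR\|$ and yields $\|V^{-1}-S\|\le c_{1}Q^{2}/(q^{3}mn)$. The assumptions $Q/q=o(n)$ and $m/n=O(1)$ enter exactly here, guaranteeing both that $R$ is uniformly small and that the bulk gap dominates the near-null mode for all large $n$; I expect the bookkeeping of these remainder estimates to be the most technical part of the proof.
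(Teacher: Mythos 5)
Your preparatory computations agree with the paper's and are correct: your $R$ is the negative of the paper's $U=I-VS$, your entrywise formula for $SR$ is exactly the paper's $W=SU$ up to sign, and your bound $\|SR\|=O(Q/(q^2mn))$ is the paper's $\|W\|\le 3Q/(q^2mn)$. Your exact treatment of the rank-one piece is also right: from $V\mathbf{e}=\mathbf{c}$ and $\mathbf{e}^\top\mathbf{c}=v_{m+n,m+n}$ one gets $(V^{-1}-S)\mathbf{c}=-D^{-1}\mathbf{c}$, where $D=\mathrm{diag}(v_{1,1},\dots,v_{m+n-1,m+n-1})$, so $FR_1$ is entrywise $O(Q/(q^2mn))$. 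The genuine gap is at the step you yourself defer as ``bookkeeping'': controlling $FR_2$, $R_2=\Delta D^{-1}$ with $\Delta=V-D$, by ``positive-definiteness on $\mathbf{e}^{\perp}$''. That tool cannot deliver the claimed rate, because the operator you must invert, $I+R_2=VD^{-1}$, is ill-conditioned for a reason unrelated to the direction $\mathbf{e}$ that $S$ has already absorbed: $R_2$ is nonnegative with bipartite block structure $\bigl(\begin{smallmatrix}0&*\\ *&0\end{smallmatrix}\bigr)$, so its spectrum is symmetric about zero; its minimal column sum is $1-\max_{j\le m}c_j/v_{j,j}\ge 1-O(Q/(qn))$, so its Perron root is at least $1-O(Q/(qn))$, and by the sign-flip symmetry it also has an eigenvalue below $-1+O(Q/(qn))$. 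Hence the spectral radius of $(I+R_2)^{-1}$ --- and a fortiori the maximum column sum that an entrywise bound on $F=(-SR-FR_1)(I+R_2)^{-1}$ would invoke --- is at least of order $qn/Q$. The resulting estimate is at best $O(1/(qm))$, which misses the target $c_1Q^2/(q^3mn)$ by a factor of order $nq^2/Q^2$, i.e.\ of order $n$ when $Q/q$ is bounded. In short, after removing $\mathbf{e}$ there remains a second near-singular (sign-flipped Perron) direction, and norm-type control cannot be converted into entrywise bounds at the $1/(mn)$ scale; the asserted ``one further factor $Q/q$'' is precisely what needs proof and is not obtainable by the proposed means.

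What actually closes the recursion --- and what the paper imports wholesale from the proof of Proposition 1 of \cite{yan2016asymptotics} --- is an entrywise iteration of $F=FU+W$ that tracks two quantities simultaneously: $\max_{i,j}|f_{i,j}|$ and the within-row oscillations $|f_{i,j}-f_{i,k}|$. The structure of $U$ makes $(FU)_{i,j}$ essentially a weighted average of the row $(f_{i,k})_k$ over the block opposite to $j$, plus the rank-one term you already handle exactly; so the recursion says that $f_{i,j}$ plus a common (nearly $j$-independent) average is small, taking differences in $j$ cancels that average up to terms controlled by the oscillation itself, and the oscillation is in turn driven by $|w_{i,j}-w_{i,k}|$. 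This sign cancellation is exactly why the paper proves the difference bound $\max(|w_{i,j}|,|w_{i,j}-w_{i,k}|)\le 3Q/(q^2mn)$ --- a bound your proposal never uses, and the missing ingredient that recovers the factor $1/n$ your spectral argument loses.
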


\begin{proof}
Recall that if $V\in \mathcal{L}_{m,n}(q, Q)$, then for $i=1, \ldots, m+n$,
\begin{eqnarray*}
v_{i,i} & = & \sum_{j=1}^{m+n} (1-\delta_{i,j}) v_{i,j}
=\sum_{j=1}^{m+n} (1-\delta_{j,i}) v_{j,i}\\  &=&  \left\{\begin{array}{ll}
\sum_{j=m+1}^{m+n} v_{i,j}=\sum_{j=m+1}^{m+n} v_{j,i},  &  i=1,\ldots, m, \\
\sum_{j=1}^{m} v_{i,j}=\sum_{j=1}^{m} v_{j,i}, &  i=m+1,\ldots, m+n,
\end{array}
\right.
\end{eqnarray*}
and if $v_{i,j}>0$ for $1\le i \le m,1\le j \le n, i\neq j$, then
\[
q \le v_{i,j} \le Q.
\]
The above equation and inequality will be repeatedly used in this proof.

Let $I$ denote the $(m+n-1)\times (m+n-1)$ identity matrix. Define $F=(f_{i,j})=V^{-1}-S$, $U=(u_{i,j})=I-VS$ and $W=(w_{i,j})=SU$. Then we have the recursion

\begin{equation}\label{recursion}
F=T^{-1}-S=(T^{-1}-S)(I-TS)+S(I-TS)=FU+W.
\end{equation}
Note that for $i=1, \ldots, m, ~j=1, \ldots, m$, we have
\begin{eqnarray*}
\nonumber
u_{i,j}&=&\delta_{i,j}-\sum_{k=1}^{m+n-1} v_{i,k}s_{k,j}\\
       &=&\delta_{i,j}-\left[\sum_{k=1}^{m} v_{i,k}( \frac{\delta_{k,j}}{ v_{j,j} } + \frac{1}{v_{m+n,m+n}})+
        \sum_{k=m+1}^{m+n-1} v_{i,k}( \frac{\delta_{k,j}}{ v_{j,j} }-\frac{1}{v_{m+n, m+n}} )\right]\\
       &=&(\delta_{i,j}-1)\frac{v_{i,j}}{v_{j,j}}- \frac{v_{i,i}-\sum_{k=m+1}^{m+n-1}v_{i,k} }{v_{m+n, m+n}}
       =(\delta_{i,j}-1)\frac{v_{i,j}}{v_{j,j}} - \frac{v_{i,m+n} }{v_{m+n, m+n}}.
\end{eqnarray*}
Similarly, we also have
\[
u_{i,j} =\left\{ \begin{array}{ll}
(\delta_{i,j}-1)\frac{v_{i,j}}{v_{j,j}}+\frac{v_{i,m+n} }{v_{m+n, m+n}} & i=1, \ldots, m; j=m+1, \ldots, m+n-1, \\
(\delta_{i,j}-1)\frac{v_{i,j}}{v_{j,j}} & i=m+1, \ldots, m+n-1; j=1, \ldots, m, m+1, \ldots, m+n-1.
\end{array}
\right.
\]
In all, $u_{i,j}$ can be written in a unified form:
\begin{equation} \label{vij}
u_{i,j}= (\delta_{i,j}-1)\frac{v_{i,j}}{v_{j,j}}+ (1_{\{i\le m, j>m\}}
- 1_{\{i\le m, j\le m\}} )\frac{v_{i,m+n} }{v_{m+n, m+n}},
\end{equation}
where $1_{\{\cdot\}}$ is an indicator function.  Similarly, for $i=1, \ldots, m, ~j=1, \ldots, m$, we have
\begin{eqnarray}
\nonumber
w_{i,j}&=&\sum_{k=1}^{m+n-1} s_{i,k}u_{k,j} \\
\nonumber
&=&
\sum_{k=1}^{m}(\frac{\delta_{i,k}}{v_{i,i}}+\frac{1}{v_{m+n, m+n}} )\left[(\delta_{k,j}-1)\frac{v_{k,j}}{v_{j,j}} - \frac{v_{k,m+n} }{v_{m+n, m+n}} \right]
+ \sum_{k=m+1}^{m+n-1}(-\frac{1}{v_{m+n, m+n}} )\left[(\delta_{k,j}-1)\frac{v_{k,j}}{v_{j,j}}\right ]
\\
\nonumber
&=& \frac{(\delta_{i,j}-1)v_{i,j}}{v_{i,i}v_{j,j}} - \frac{v_{i,m+n}}{v_{i,i}v_{m+n,m+n}} + 0 - \frac{ v_{m+n,m+n}}{v^2_{m+n,m+n}}
+ \frac{ v_{j,j} - v_{m+n,j} }{v_{m+n, m+n}v_{j,j} } \\
\nonumber
&=& \frac{(\delta_{i,j}-1)v_{i,j}}{v_{i,i}v_{j,j}} - \frac{v_{i,m+n}}{v_{i,i}v_{m+n,m+n}} -
\frac{ v_{m+n,j} }{ v_{m+n, m+n} v_{j, j} },
\end{eqnarray}
and
\[
w_{i,j} =\left\{ \begin{array}{ll}
(\delta_{i,j}-1)\frac{v_{i,j}}{v_{i,i}v_{j,j}}+\frac{v_{i,m+n} }{v_{i,i}v_{m+n, m+n}} & i=1, \ldots, m; j=m+1, \ldots, m+n-1, \\
(\delta_{i,j}-1)\frac{v_{i,j}}{v_{i,i}v_{j,j}}+\frac{v_{m+n,j} }{v_{j,j}v_{m+n, m+n}} & i=m+1, \ldots, m+n-1; j=1, \ldots, m, \\
(\delta_{i,j}-1)\frac{v_{i,j}}{v_{i,i}v_{j,j}} & i=m+1, \ldots, m+n-1; j=m+1, \ldots, m+n-1. \\
\end{array}
\right.
\]
Further, when $1\le i\neq j \le m+n $,
\begin{eqnarray*}
&&0\le\frac{v_{i,j}}{v_{i,i}v_{j,j}}\le \frac{Q}{q^2mn},
\end{eqnarray*}
and it is not difficult to show that, when $i,j,k$ are different from each other, we have
\begin{eqnarray*}
|w_{i,i}|&\le & \frac{2Q}{q^2mn},\\
|w_{i,j}|&\le & \frac{3Q}{q^2mn}, \\
|w_{i,j}-w_{i,k}| & \le & \frac{2Q}{q^2mn}, \\
|w_{i,i}-w_{i,k}| & \le & \frac{2Q}{q^2mn}.
\end{eqnarray*}
It follows that
\begin{equation}\label{wij}
\max(|w_{i,j}|, |w_{i,j}-w_{i,k}|)\le \frac{3Q}{q^2mn}~~~~~ \mbox{for all $i,j,k$}.
\end{equation}
Next we use the recursion \eqref{recursion} to obtain an upper bound of the
approximate error $\| F \|$.
By \eqref{recursion} and \eqref{vij}, for any $i\in\{1, \ldots, m+n-1\}$, we have that for $j=1,\ldots,m+n-1$
\begin{equation*}\label{fij}
f_{i,j}=\sum_{k=1}^{m+n-1} f_{i,k}[(\delta_{k,j}-1)\frac{v_{k,j}}{v_{j,j}}+
(1_{\{k\le m, j> m\}} - 1_{\{k\le m, j\le m\}} )\frac{v_{k,m+n}}{v_{m+n,m+n}}] + w_{i,j}.
\end{equation*}
Thus, to prove Lemma \ref{lemma:appro}, it is sufficient to show that $|f_{i,j}|\le c_1Q^2/(q^3mn)$ for any $i,j$.
The condition $m/n=O(1)$ guarantees that $Q/q=o(m)$ if $Q/q=o(n)$.
The remainder of the proof of Lemma \ref{lemma:appro} is similar to the proof of Proposition 1 in \cite{yan2016asymptotics}, and we omit the details here.

\end{proof}

Note that if $Q$ and $q$ are bounded constants, then the upper bound of the above approximation error is on the order of $(mn)^{-1},$ indicating that $S$ is a high-accuracy approximation to $V^{-1}$. Further, based on the above proposition, we immediately have the following lemma.

\begin{lemma}\label{lemma:appro1}
If  $V\in{\cal L}_n(q,Q)$ with $Q/q=o(n)$ and $m/n=O(1)$, then for a vector $\mathbf{x}\in R^{m+n-1}$,
\[
\| V^{-1}\mathbf{x} \|_{\infty} \leq \frac{2c_1Q^2}{q^3mn}+\frac{| x_{m+n} |}{v_{m+n,m+n}}+\max_{i=1,\dots,m+n-1}\frac{| x_{i} |}{v_{i,i}},
\]
where $x_{m+n}:=\sum_{i=1}^{m}x_{i}-\sum_{i=m+1}^{m+n-1}x_{i}$.
\end{lemma}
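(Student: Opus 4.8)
The plan is to split the action of $V^{-1}$ on $\mathbf{x}$ into the part captured by the explicit approximation $S$ and a residual controlled by Lemma \ref{lemma:appro}. Writing $V^{-1}\mathbf{x}=S\mathbf{x}+(V^{-1}-S)\mathbf{x}$ and applying the triangle inequality gives
\[
\|V^{-1}\mathbf{x}\|_{\infty}\leq \|S\mathbf{x}\|_{\infty}+\|(V^{-1}-S)\mathbf{x}\|_{\infty},
\]
so it suffices to treat the two summands separately: the first should produce the two explicit terms $\max_i|x_i|/v_{i,i}$ and $|x_{m+n}|/v_{m+n,m+n}$, and the second should produce the error term of order $Q^2/(q^3mn)$.

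First I would evaluate $S\mathbf{x}$ coordinatewise using the four-block definition of $s_{i,j}$ in \eqref{eq:Smatrix}. For an index $i\in\{1,\dots,m\}$ the diagonal piece $\delta_{i,j}/v_{i,i}$ contributes only $x_i/v_{i,i}$, while every off-diagonal entry is the constant $+1/v_{m+n,m+n}$ on the block $j\leq m$ and $-1/v_{m+n,m+n}$ on the block $j>m$. Collecting these constants yields exactly $\big(\sum_{j\leq m}x_j-\sum_{j>m}x_j\big)/v_{m+n,m+n}=x_{m+n}/v_{m+n,m+n}$ by the definition $x_{m+n}:=\sum_{i=1}^{m}x_i-\sum_{i=m+1}^{m+n-1}x_i$. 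The analogous computation for $i\in\{m+1,\dots,m+n-1\}$ has the opposite sign pattern and gives $(S\mathbf{x})_i=x_i/v_{i,i}-x_{m+n}/v_{m+n,m+n}$. In either case $|(S\mathbf{x})_i|\leq |x_i|/v_{i,i}+|x_{m+n}|/v_{m+n,m+n}$, so taking the maximum over $i$ produces precisely the last two terms of the claimed bound.

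For the residual I would estimate each coordinate by
\[
\big|((V^{-1}-S)\mathbf{x})_i\big|=\Big|\sum_{j=1}^{m+n-1}(V^{-1}-S)_{i,j}\,x_j\Big|\leq \|\mathbf{x}\|_{\infty}\sum_{j=1}^{m+n-1}\big|(V^{-1}-S)_{i,j}\big|,
\]
and then invoke Lemma \ref{lemma:appro}, which bounds every entry of $V^{-1}-S$ by $c_1Q^2/(q^3mn)$. Summing the at most $m+n-1$ entries in a single row and using $m+n-1=O(n)$ together with the hypothesis $m/n=O(1)$ collapses the row sum to the stated error order, completing the argument.

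The only content beyond Lemma \ref{lemma:appro} is the exact evaluation of $S\mathbf{x}$, so the main thing to watch is the bookkeeping of the constant off-diagonal entries and their signs, so that they assemble cleanly into the single term $x_{m+n}/v_{m+n,m+n}$; once that telescoping is verified the estimate is immediate. The residual step is then routine given the entrywise control from Lemma \ref{lemma:appro}, with the only care being to track the factor $m+n-1$ against the $mn$ in its denominator. I therefore expect no genuine obstacle here—this is essentially a corollary of Lemma \ref{lemma:appro}—and the decomposition $V^{-1}=S+(V^{-1}-S)$ is exactly the device that reduces the full inverse to an explicit, easily-read-off leading term plus a negligible correction.
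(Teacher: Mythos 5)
Your decomposition $V^{-1}\mathbf{x}=S\mathbf{x}+(V^{-1}-S)\mathbf{x}$ is exactly the intended route: the paper gives no separate proof, deriving this lemma ``immediately'' from Lemma \ref{lemma:appro}, and your coordinatewise evaluation $(S\mathbf{x})_i=x_i/v_{i,i}+(-1)^{1_{\{i>m\}}}x_{m+n}/v_{m+n,m+n}$ is correct --- it agrees with the identity the paper itself uses later for $(S\mathbf{h})_i$ in the proof of Lemma \ref{Lemma 7}. So the structure and the main computation are right.

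The one step that does not hold up as written is your claim that the residual ``collapses to the stated error order.'' The row-sum estimate gives
\[
\|(V^{-1}-S)\mathbf{x}\|_\infty \le (m+n-1)\,\frac{c_1Q^2}{q^3mn}\,\|\mathbf{x}\|_\infty ,
\]
which, with $m+n-1\asymp n$, is of order $c_1Q^2\|\mathbf{x}\|_\infty/(q^3m)$; no use of $m/n=O(1)$ removes the factor $n\|\mathbf{x}\|_\infty$ separating this from the displayed $2c_1Q^2/(q^3mn)$. To be fair, this mismatch is a defect of the lemma's statement rather than of your argument: an error term free of $\|\mathbf{x}\|_\infty$ cannot be correct, since replacing $\mathbf{x}$ by $t\mathbf{x}$ scales the left side and the two main terms by $t$ while leaving the error term fixed, so the stated inequality fails for large $t$. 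And indeed, when the paper applies the lemma in the proof of Lemma \ref{lemma 2}, the term reappears with the missing factors restored, namely as $2c_1(m+n-1)Q_*^2\|F(\boldsymbol{\theta}^*)\|_\infty/(q_*^3mn)$. What you have actually proven is that corrected form, with error term $c_1(m+n-1)Q^2\|\mathbf{x}\|_\infty/(q^3mn)$, which is the version used downstream; you should state that bound explicitly instead of asserting that it equals the one displayed in the lemma.
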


Similar to Theorem 8 in \cite{yan2016asymptotics}, we have the following lemma for
the rate of convergence for the Newton's iterative sequence to solve
the system of the likelihood equations, whose proof is similar to that in \cite{yan2016asymptotics} and we omit it here.

\begin{lemma}\label{Theorem:newton}
Define a system of equations:
\[
\setlength\arraycolsep{2pt}
\begin{array}{rcl}
F_i({\boldsymbol{\theta}})&=&d_i-\sum_{k=1}^{n}f(\alpha_i+\beta_k),i=1,\dots,m,\\
F_{m+j}({\boldsymbol{\theta}})&=&b_j-\sum_{k=1}^{m}f(\alpha_k+\beta_j),j=1,\dots,n-1,\\
F({\boldsymbol{\theta}})&=&(F_1({\boldsymbol{\theta}}),\dots,F_m({\boldsymbol{\theta}}),F_{m+1}
({\boldsymbol{\theta}}),\dots,F_{m+n-1}({\boldsymbol{\theta}}))^T,
\end{array}
\]
where $f(\cdot)$ is a continuous function with the third derivative. Let $D\subset \mathbb{R}^{m+n-1}$ be a convex set and assume for any $\mathbf{x,y,v}\in D$, we have
\[
\parallel [F^{'}(\mathbf{x})-F^{'}(\mathbf{y})]\mathbf{v}\parallel_{\infty}\leq K_1\parallel\mathbf{x-y}\parallel_{\infty}\parallel\mathbf{v}\parallel_{\infty},
\]
\[
\max_{i=1,\dots,m+n-1}\parallel F^{'}_i(\mathbf{x})-F^{'}_i(\mathbf{y})\parallel_{\infty}\leq K_2\parallel\mathbf{x-y}\parallel_{\infty},
\]
where $F^{'}({\boldsymbol{\theta}})$ is the Jacobian matrix of $F$ on ${\boldsymbol{\theta}}$ and $F^{'}_{i}({\boldsymbol{\theta}})$ is the gradient function of $F_i$ on ${\boldsymbol{\theta}}$.
Consider ${\boldsymbol{\theta}}^{(0)}\in D$ with $\Omega({\boldsymbol{\theta}}^{(0)},2r)\subset D$ where $r=\parallel [F^{'}({\boldsymbol{\theta}}^{(0)})]^{-1}F({\boldsymbol{\theta}}^{(0)})\parallel_{\infty}$ for any ${\boldsymbol{\theta}} \in \Omega({\boldsymbol{\theta}}^{(0)},2r)$.  We assume that $m/n=O(1)$ and
\[
F^{'}({{\boldsymbol{\theta}}})\in {\cal L}_{m,n}(q,Q)~~~~\mbox{or}~~-F^{'}({{\boldsymbol{\theta}}})\in {\cal L}_{m,n}(q,Q).
\]
For $k=1, 2, \ldots$, define the Newton iterates ${\boldsymbol{\theta}}^{(k+1)}={\boldsymbol{\theta}}^{(k)}-[F^{'}({\boldsymbol{\theta}}^{(k)})]^{-1}F({\boldsymbol{\theta}}^{(k)})$.
Let
\[
\rho=\frac{c_1(m+n-1)Q^2K_1}{2q^3mn}+\frac{K_2}{mq}.
\]
If $\rho<1/2,$ then ${\boldsymbol{\theta}}^{(k)} \in \Omega({\boldsymbol{\theta}}^{(0)},2r), k=1, 2, \ldots$, are well defined and satisfy
\[
\parallel{\boldsymbol{\theta}}^{(k+1)}-{\boldsymbol{\theta}}^{(0)}\parallel_{\infty}\leq r/(1-\rho r).
\]
Further, $\lim_{k\rightarrow \infty}{\boldsymbol{\theta}}^{(k)}$ exists and the limiting point is precisely the solution of $F({\boldsymbol{\theta}})=0$ in the rage of ${\boldsymbol{\theta}} \in \Omega({\boldsymbol{\theta}}^{(0)}, 2r)$.
\end{lemma}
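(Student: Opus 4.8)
The plan is to carry out a Newton--Kantorovich argument in the $\ell_\infty$ norm, substituting the explicit approximate inverse $S$ of \eqref{eq:Smatrix} for the exact inverse Jacobian so that the two summands of $\rho$ emerge separately from the two Lipschitz hypotheses. Without loss of generality take $F'(\boldsymbol{\theta})\in\mathcal{L}_{m,n}(q,Q)$ on $\Omega(\boldsymbol{\theta}^{(0)},2r)$, the case $-F'\in\mathcal{L}_{m,n}(q,Q)$ being identical after a sign change; then every Jacobian met along the iteration is invertible and Lemmas \ref{lemma:appro}--\ref{lemma:appro1} apply to it. I would prove by induction on $k$ that $\boldsymbol{\theta}^{(k)}\in\Omega(\boldsymbol{\theta}^{(0)},2r)$, that the step $\Delta^{(k)}:=\boldsymbol{\theta}^{(k+1)}-\boldsymbol{\theta}^{(k)}=-[F'(\boldsymbol{\theta}^{(k)})]^{-1}F(\boldsymbol{\theta}^{(k)})$ is well defined, and that the quadratic contraction $\|\Delta^{(k+1)}\|_\infty\le\rho\|\Delta^{(k)}\|_\infty^2$ holds. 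Setting $a_k:=\rho\|\Delta^{(k)}\|_\infty$ turns the contraction into $a_{k+1}\le a_k^2$, so $a_k\le(\rho r)^{2^k}$ because $\|\Delta^{(0)}\|_\infty=r$; summing the displacements then gives $\sum_{j=0}^{k}\|\Delta^{(j)}\|_\infty\le r/(1-\rho r)$, which is the asserted bound and, under $\rho<1/2$, does not exceed $2r$, so the iterates never leave $\Omega(\boldsymbol{\theta}^{(0)},2r)$ and the induction closes.

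The core is the contraction estimate. Because the Newton relation cancels $F(\boldsymbol{\theta}^{(k)})+F'(\boldsymbol{\theta}^{(k)})\Delta^{(k)}$, the residual at the new point is the pure first-order Taylor remainder $F(\boldsymbol{\theta}^{(k+1)})=\int_0^1[F'(\boldsymbol{\theta}^{(k)}+t\Delta^{(k)})-F'(\boldsymbol{\theta}^{(k)})]\Delta^{(k)}\,dt$. To bound $\Delta^{(k+1)}=-[F'(\boldsymbol{\theta}^{(k+1)})]^{-1}F(\boldsymbol{\theta}^{(k+1)})$ I split the inverse as $[F']^{-1}F=SF+([F']^{-1}-S)F$. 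For the error piece I pass to the induced norm, $\|[F']^{-1}-S\|_\infty\le(m+n-1)\|[F']^{-1}-S\|\le(m+n-1)c_1Q^2/(q^3mn)$ by Lemma \ref{lemma:appro}, and bound $\|F(\boldsymbol{\theta}^{(k+1)})\|_\infty\le\tfrac{K_1}{2}\|\Delta^{(k)}\|_\infty^2$ by the first Lipschitz hypothesis; their product is exactly the first summand of $\rho$ times $\|\Delta^{(k)}\|_\infty^2$. For the piece $SF$ I use the diagonal-plus-correction form of $S$: its $i$-th coordinate equals $F_i/v_{i,i}$ together with a single term $\pm F_{m+n}/v_{m+n,m+n}$, where $F_{m+n}=\sum_{i\le m}F_i-\sum_{i>m}F_i$ telescopes (using $\sum_{i\le m}d_i=\sum_j b_j$) to the residual of the node-$n$ equation dropped for identifiability. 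Controlling $|F_i|$ and $|F_{m+n}|$ by the second Lipschitz hypothesis as $O(K_2\|\Delta^{(k)}\|_\infty^2)$ and dividing by $v_{i,i},v_{m+n,m+n}\ge mq$ produces the second summand $K_2/(mq)$.

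The step I expect to be the main obstacle is the bound $|F_i(\boldsymbol{\theta}^{(k+1)})|\le CK_2\|\Delta^{(k)}\|_\infty^2$ with an absolute constant $C$. Writing the remainder componentwise as $\int_0^1[F'_i(\boldsymbol{\theta}^{(k)}+t\Delta^{(k)})-F'_i(\boldsymbol{\theta}^{(k)})]\Delta^{(k)}\,dt$ and crudely bounding $|[F'_i(\mathbf x)-F'_i(\mathbf y)]\Delta|\le\|F'_i(\mathbf x)-F'_i(\mathbf y)\|_\infty\|\Delta\|_1$ loses a factor of order $m+n$ and would wipe out the $1/m$ in $K_2/(mq)$. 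What rescues the estimate is the sparsity and diagonal dominance imposed by $F'\in\mathcal{L}_{m,n}(q,Q)$: each $F'_i$ is supported on a single row/column block and is dominated by its diagonal entry, so the $\ell_1$ and $\ell_\infty$ norms of the gradient difference are comparable and $|F_i|$ is genuinely $O(K_2\|\Delta^{(k)}\|_\infty^2)$; the same cancellation makes $F_{m+n}$ a single well-behaved residual rather than an unsigned sum of $m+n-1$ terms. Establishing this comparison is precisely where the structure of the model enters, as in the proof of Theorem 8 of \cite{yan2016asymptotics}. With it in place, the remaining facts---that $\rho<1/2$ forces the contraction factor below $1$ and that the limit solves $F(\boldsymbol{\theta})=\mathbf 0$ by continuity together with $\|\Delta^{(k)}\|_\infty\to0$ and the lower bound on $[F']^{-1}$---are routine.
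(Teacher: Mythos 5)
Your proposal is correct and takes essentially the same route as the paper: the paper in fact omits this proof entirely, deferring to Theorem 8 of \cite{yan2016asymptotics}, and the argument behind that result is precisely the Newton--Kantorovich iteration you reconstruct --- split $[F^{'}]^{-1}$ into $S$ plus the error controlled by Lemma \ref{lemma:appro}, bound the full Taylor remainder via $K_1$ and the coordinatewise residuals (including the telescoped $F_{m+n}$) via $K_2$ using the sparse block structure of each $F^{'}_i$, obtain the quadratic contraction with factor $\rho$, and sum geometrically. The one blemish you share with the lemma itself --- that $r/(1-\rho r)\le 2r$ really requires $\rho r\le 1/2$ rather than $\rho<1/2$ --- is inherited from the statement and is consistent with how the paper applies it, since the proof of Theorem \ref{Theorem 1} verifies $\rho r\rightarrow 0$.
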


\renewcommand\theequation{A\arabic{equation}}
\setcounter{equation}{0}
\subsection*{Appendix A: Proofs for Theorem \ref{Theorem 1}}

We define a system of functions:
\[
\setlength\arraycolsep{2pt}
\begin{array}{rcl}
F_{i}({\boldsymbol{\theta}})&=&
d_i-\sum_{j=1}^{n}\frac{e^{\alpha_{i}+\beta_{j}}} {1+e^{\alpha_{i}+\beta_{j}}}
,~~~i=1,\dots,m,\\
F_{m+j}({\boldsymbol{\theta}})&=&
b_j-\sum_{i=1}^{m}\frac{e^{\alpha_{i}+\beta_{j}}} {1+e^{\alpha_{i}+\beta_{j}}},~~~j=1,\dots,n-1\\
F({\boldsymbol{\theta}})&=&
(F_{1}({{\boldsymbol{\theta}}}),\dots,F_{m+n-1}({{\boldsymbol{\theta}}}))^\top.
\end{array}
\]
Note that the solution to the equation $F({\boldsymbol{\theta}})=0$ is precisely the MLE. Then the Jacobin matrix $F^{'}({\boldsymbol{\theta}})$ of $F({\boldsymbol{\theta}})$ can
be calculated as follows. For $i=1,\dots,m,$
\[
\frac{\partial{F_i}}{\partial{{\alpha_l}}}=0,l=1,\dots,m,l\neq i;~~
\frac{\partial{F_i}}{\partial{{\alpha_i}}}=-\sum^{n}_{j=1}\frac{e^{\alpha_{i}+\beta_{j}}}{(1+e^{\alpha_{i}+\beta_{j}})^2},
\]
\[
\frac{\partial{F_i}}{\partial{ {\beta_j}}}=-\frac{e^{\alpha_{i}+\beta_{j}}}{(1+e^{\alpha_{i}+\beta_{j}})^2},~j=1,\dots,n-1,
\]
and for $j=1,\dots,n-1$,
\[
\frac{\partial{F_{m+j}}}{\partial{ {\alpha_l}}}=-\frac{e^{\alpha_{l}+\beta_{j}}}{(1+e^{\alpha_{l}+\beta_{j}})^2},~l=1,\dots,m,
\]
\[
\frac{\partial{F_{m+j}}}{\partial{{\beta_j}}}=
-\sum^{m}_{i=1}\frac{e^{\alpha_{i}+\beta_{j}}}{(1+e^{\alpha_{i}+\beta_{j}})^2};~~\frac{\partial{F_{m+j}}}{\partial{{\beta_k}}}=0
,~k=1,\dots,n-1,k\neq j.
\]
Since $e^{x}/(1+e^{x})^2$ is a decreasing function on $x$ when $x\geq 0$ and an increasing function when $x\leq 0$.
Consequently, for any $i, j$,  we have
\[
\frac{e^{2\parallel{{\boldsymbol{\theta}}}\parallel_{\infty}}}{(1+e^{2\parallel{{\boldsymbol{\theta}}}\parallel_{\infty}})^2}
\leq-F^{'}_{i,j}({\boldsymbol{\theta}})\leq\frac{1}{4}.
\]
According to the definition of $ {\cal L}_{m,n}(q, Q)$,
we have that  $-F^{'}({\boldsymbol{\theta}})\in {\cal L}_{m,n}(q,Q)$, where
\[
q=\frac{e^{2\parallel{{\boldsymbol{\theta}}}\parallel_{\infty}}}{(1+e^{2\parallel{{\boldsymbol{\theta}}}\parallel_{\infty}})^2},
~~Q=\frac{1}{4}.
\]
Therefore, Lemma \ref{Theorem:newton} and Lemma \ref{lemma:appro} can be applied.
Let ${{\boldsymbol{\theta}}}^*$ denote the true parameter vector.
The constants $K_1, K_2$ and $r$ in the upper bounds of Lemma \ref{Theorem:newton} are given in the following lemma.

\begin{lemma}\label{lemma 2}
Take $D=R^{m+n-1}$ and ${{\boldsymbol{\theta}}}^0={{\boldsymbol{\theta}}}^*$ in Lemma \ref{Theorem:newton}.
Assume
\begin{equation}\label{B1}
\max\{\max_{i=1,\dots,m}|d_i-\mathbb{E}(d_i)|, \max_{j=1,\dots,n}|b_j-\mathbb{E}(b_j)|\}\leq \sqrt{n\log{n}}.
\end{equation}
If $m/n=O(1)$, then we can choose the constants $K_1, K_2$ and $r$ in Lemma \ref{Theorem:newton} as
\[
K_1=n,
K_2=\frac{n}{2},
r\leq \frac{(\log n)^{1/2}}{n^{1/2}}\left(c_{11}e^{6\parallel{\boldsymbol{\theta}}^{*}\parallel_{\infty}}+c_{12}
e^{2\parallel{\boldsymbol{\theta}}^{*}\parallel_{\infty}}\right),
\]
where $c_{11}, c_{12}$ are constants.
\end{lemma}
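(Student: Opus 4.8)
The plan is to verify the three quantities $K_1$, $K_2$, $r$ of Lemma \ref{Theorem:newton} one at a time. The first two are purely deterministic Lipschitz bounds on the Jacobian $F'$ that must hold on all of $D=\mathbb{R}^{m+n-1}$, whereas $r$ is the only place the data enter, through the deviation bound \eqref{B1}. Writing $f(x)=e^{x}/(1+e^{x})$, every entry of $F'({\boldsymbol{\theta}})$ is either a single value $-f'(\alpha_i+\beta_j)$ or a sum of such values, so the whole argument reduces to controlling the first two derivatives of $f$.

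For $K_1$ and $K_2$ I would first record that $f''(x)=f(x)(1-f(x))(1-2f(x))$, so that $|f''(x)|\le\tfrac14$ for every $x$ (since $p(1-p)\le\tfrac14$ and $|1-2p|\le1$ for $p=f(x)\in(0,1)$). The mean value theorem then gives $|f'(a)-f'(b)|\le\tfrac14|a-b|$, and since each relevant argument difference obeys $|(x_{\alpha_i}+x_{\beta_j})-(y_{\alpha_i}+y_{\beta_j})|\le 2\|\mathbf{x}-\mathbf{y}\|_\infty$, every such term is bounded by $\tfrac12\|\mathbf{x}-\mathbf{y}\|_\infty$. For an event row $i\le m$, the $i$th component of $[F'(\mathbf{x})-F'(\mathbf{y})]\mathbf{v}$ collects the diagonal entry (a sum of $n$ terms) and $n-1$ off-diagonal entries, hence at most $2n-1$ such terms, giving the bound $\tfrac{2n-1}{2}\|\mathbf{x}-\mathbf{y}\|_\infty\|\mathbf{v}\|_\infty<n\|\mathbf{x}-\mathbf{y}\|_\infty\|\mathbf{v}\|_\infty$; the actor rows involve only $m\le n$ events and are smaller, so $K_1=n$ is admissible. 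For $K_2$ the largest coordinate of $F'_i(\mathbf{x})-F'_i(\mathbf{y})$ is the diagonal one, a sum of at most $n$ terms, bounded by $\tfrac{n}{2}\|\mathbf{x}-\mathbf{y}\|_\infty$, so $K_2=n/2$.

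The bound on $r=\|[F'({\boldsymbol{\theta}}^*)]^{-1}F({\boldsymbol{\theta}}^*)\|_\infty$ is the heart of the matter. I would first note that the components of $F({\boldsymbol{\theta}}^*)$ are precisely the centered degrees $d_i-\mathbb{E}(d_i)$ and $b_j-\mathbb{E}(b_j)$, so \eqref{B1} yields $\|F({\boldsymbol{\theta}}^*)\|_\infty\le\sqrt{n\log n}$. To invoke Lemma \ref{lemma:appro1} with $\mathbf{x}=F({\boldsymbol{\theta}}^*)$ I must also control the auxiliary coordinate $x_{m+n}=\sum_{i=1}^m(d_i-\mathbb{E}d_i)-\sum_{j=1}^{n-1}(b_j-\mathbb{E}b_j)$; using the edge-count identity $\sum_{i=1}^m d_i=\sum_{j=1}^n b_j$ this telescopes to $x_{m+n}=b_n-\mathbb{E}(b_n)$, again at most $\sqrt{n\log n}$ in modulus. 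Since $-F'({\boldsymbol{\theta}}^*)$ is the Fisher information $V\in\mathcal{L}_{m,n}(q,Q)$ with $Q=\tfrac14$ and $q=e^{2\|{\boldsymbol{\theta}}^*\|_\infty}/(1+e^{2\|{\boldsymbol{\theta}}^*\|_\infty})^2\ge\tfrac14 e^{-2\|{\boldsymbol{\theta}}^*\|_\infty}$ (and $Q/q\le e^{2\|{\boldsymbol{\theta}}^*\|_\infty}=o(n)$ under the standing bound $\|{\boldsymbol{\theta}}^*\|_\infty\le\tau\log n$, so Lemma \ref{lemma:appro1} applies), with $v_{i,i}\ge nq$ on event rows and $v_{m+j,m+j},v_{m+n,m+n}\ge mq$, the lemma leaves three summands to estimate.

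The first summand $2c_1Q^2/(q^3mn)$ is where the factor $e^{6\|{\boldsymbol{\theta}}^*\|_\infty}$ enters, via $q^{-3}\le 64\,e^{6\|{\boldsymbol{\theta}}^*\|_\infty}$; bounding $1/(mn)\le 1/n\le(\log n)^{1/2}/n^{1/2}$ folds it into the $c_{11}e^{6\|{\boldsymbol{\theta}}^*\|_\infty}$ term. The other two summands $|x_{m+n}|/v_{m+n,m+n}$ and $\max_i|x_i|/v_{i,i}$ are each at most $\sqrt{n\log n}/(mq)$, and with $q^{-1}\le4e^{2\|{\boldsymbol{\theta}}^*\|_\infty}$ they reduce to a constant multiple of $e^{2\|{\boldsymbol{\theta}}^*\|_\infty}(\log n)^{1/2}/n^{1/2}$, producing the $c_{12}e^{2\|{\boldsymbol{\theta}}^*\|_\infty}$ term. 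The step I would watch most carefully is exactly here: the actor-row and auxiliary denominators are of order $m$, not $n$, so driving these terms down at the rate $n^{-1/2}$ forces $n=O(m)$. Thus the hypothesis ``$m/n=O(1)$'' has to be read in its two-sided sense, that $m$ and $n$ are of the same order, which is the way it is also used in Lemmas \ref{lemma:appro}--\ref{lemma:appro1}. Granting this, the three estimates combine to the asserted $r\le(\log n)^{1/2}n^{-1/2}(c_{11}e^{6\|{\boldsymbol{\theta}}^*\|_\infty}+c_{12}e^{2\|{\boldsymbol{\theta}}^*\|_\infty})$.
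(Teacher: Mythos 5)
Your proposal is correct and follows essentially the same route as the paper's proof: the bound $|f''|\le 1/4$ for $f(x)=e^x/(1+e^x)$ combined with the mean value theorem gives $K_1=n$ and $K_2=n/2$ by the same entry-counting, and $r$ is bounded by applying Lemma \ref{lemma:appro1} to $\mathbf{x}=F({\boldsymbol{\theta}}^*)$ with $Q_*=1/4$, $q_*=e^{2\|{\boldsymbol{\theta}}^*\|_\infty}/(1+e^{2\|{\boldsymbol{\theta}}^*\|_\infty})^2$ under condition \eqref{B1}. The only divergence is that the paper's own application of Lemma \ref{lemma:appro1} carries the factor $(m+n-1)\|F({\boldsymbol{\theta}}^*)\|_\infty$ in the first summand (apparently omitted by typo in that lemma's statement, which you quote verbatim), but under the two-sided reading of $m/n=O(1)$ that term is still of order $e^{6\|{\boldsymbol{\theta}}^*\|_\infty}(\log n)^{1/2}n^{-1/2}$, so your conclusion is unaffected, and your explicit identification of $x_{m+n}=b_n-\mathbb{E}(b_n)$ and of the $m\asymp n$ requirement makes precise what the paper leaves implicit.
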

\begin{proof}
For fixed $m,n$, we first derive $K_1$ and $K_2$ in the inequalities of Lemma \ref{Theorem:newton}. Let $\mathbf{x},\mathbf{y}\in R^{m+n-1}$ and
\[{F_i}^{'}({\boldsymbol{\theta}})=({F^{'}_{i,1}}({\boldsymbol{\theta}}),\dots,{F^{'}_{i,m+n-1}}({\boldsymbol{\theta}}):=
(\frac{\partial{F_i}}{\partial{\alpha_1}},\dots, \frac{\partial{F_i}}{\partial{\alpha_m}},\frac{\partial{F_i}}{\partial{\beta_1}},\dots,
\frac{\partial{F_i}}{\partial{\beta_{n-1}}}).\]
Then, for $i=1,\dots,m$, we have
\[\frac{\partial^2{F_i}}{\partial{{\alpha_l}}\partial{ {\alpha_s}}}=0,s\neq l;-\sum^{n}_{j=1}\frac{e^{\alpha_{i}+\beta_{j}}(1-e^{\alpha_{i}+\beta_{j}})}{(1+e^{\alpha_{i}+\beta_{j}})^3},\]

\[\frac{\partial^2{F_i}}{\partial{ {\alpha_i}}\partial{ {\beta_s}}}=-\frac{e^{\alpha_{i}+\beta_{j}}(1-e^{\alpha_{i}+\beta_{j}})}{(1+e^{\alpha_{i}+\beta_{j}})^3},s=1,\dots,n-1,s\neq i;\frac{\partial^2{F_i}}{\partial{ {\alpha_i}}\partial{ {\beta_i}}}=0,\]
\[\frac{\partial^2{F_i}}{\partial{{\beta_j}}^2}=-\frac{e^{\alpha_{i}+\beta_{j}}(1-e^{\alpha_{i}+\beta_{j}})}{(1+e^{\alpha_{i}+\beta_{j}})^3},
j=1,\dots,n-1;\frac{\partial^2{F_i}}{\partial{ {\beta_s}}\partial{ {\beta_l}}}=0,s\neq l.\]
Note that
\begin{equation}\label{B2}
\mid\frac{e^{\alpha_{i}+\beta_{j}}(1-e^{\alpha_{i}+\beta_{j}})}{(1+e^{\alpha_{i}+\beta_{j}})^3}
\mid\leq\frac{e^{\alpha_{i}+\beta_{j}}}{(1+e^{\alpha_{i}+\beta_{j}})^2}\leq\frac{1}{4}.
\end{equation}
By the mean value theorem for vector-valued functions (\cite{lang1993real}, p.341), we have
\[F^{'}_i(\mathbf{x})-F^{'}_i(\mathbf{y})=J^{(i)}(\mathbf{x}-\mathbf{y}),\]
where
\[J^{(i)}_{s,l}=\int^1_0\frac{\partial F^{'}_{i,s}}{\partial{{\boldsymbol{\theta}}}_l}(t\mathbf{x}+(1-t)\mathbf{y})dt,~~s,l=1,\dots,m+n-1.\]
Therefore,
\[\max_s\sum_{l}^{m+n-1}\mid J^{(i)}\mid\leq \frac{n}{2}, ~~ \sum^{}_{s,l}\mid J^{(i)}_{(s,l)}\mid\leq n\]
Similarly, for $i=m+1, \dots, m+n-1,$ we also have $F^{'}_i(\mathbf{x})-F^{'}_i(\mathbf{y})=J^{(i)}(\mathbf{x}-\mathbf{y})$ and $\sum^{}_{s,l}
| J^{(i)}_{(s,l)} |\leq m$.
Consequently,
\[
\parallel F^{'}_i(\mathbf{x})-F^{'}_i(\mathbf{y})\parallel_{\infty}\leq \parallel J^{(i)}\parallel_{\infty}\parallel \mathbf{x-y}\parallel_{\infty}\leq \frac{n}{2}\parallel \mathbf{x-y}\parallel_{\infty},i=1,\dots,m+n-1,
\]
and for $\forall~\mathbf{v}\in R^{m+n-1}$,
\[
\setlength\arraycolsep{2pt}
\begin{array}{rcl}
\parallel [F^{'}_i(\mathbf{x})-F^{'}_i(\mathbf{y})]\mathbf{v}\parallel_{\infty}&=&\displaystyle\max_{i}\mid \sum_{j=1}^{m+n-1}(F_{i,j}^{'}(\mathbf{x})-F_{i,j}^{'}(\mathbf{y}))v_j\mid\\
&=&\displaystyle\max_{i}\mid(\mathbf{x-y})J^{(i)}\mathbf{v}\mid\\
&\leq&\parallel\mathbf{x-y}\parallel_{\infty}\parallel\mathbf{v}\parallel_{\infty}\sum^{}_{k,j}\mid J^{(i)}_{(s,l)}\mid\\
&\leq& n \parallel\mathbf{x-y}\parallel_{\infty}\parallel\mathbf{v}\parallel_{\infty}.
\end{array}
\]
So we can choose $K_1=n$ and $K_2=n/2$ in Lemma \ref{Theorem:newton}.

It is obvious that $-F^{'}({ {\boldsymbol{\theta}}}^{*})\in {\cal L}_n(q_{*},Q_{*})$, where
\[
q_{*}=\frac{e^{2\parallel{{\boldsymbol{\theta}}^{*}}\parallel_{\infty}}}{(1+e^{2\parallel{{\boldsymbol{\theta}}^{*}}\parallel_{\infty}})^2},
~~Q_{*}=\frac{1}{4}.
\]
Note that
\[F( {\boldsymbol{\theta}}^*)=(d_1-\mathbb{E}(d_1),\dots,d_m-\mathbb{E}(d_m),b_1-\mathbb{E}(b_1),\dots,b_{n-1}-\mathbb{E}(b_{n-1})).\]
By the assumption of \ref{B1} and Lemma \ref{lemma:appro1}, if $m/n=O(1)$, then we have
\[
\setlength\arraycolsep{2pt}
\begin{array}{rcl}
r=\parallel[F^{'}(
 {\boldsymbol{\theta}}^*)]^{-1}F( {\boldsymbol{\theta}}^*)\parallel_{\infty}&\leq&\frac{2c_1(m+n-1)Q_{*}^2\parallel F( {\boldsymbol{\theta}}^*)\parallel_{\infty}}{q_*^3mn}+\displaystyle\max_{i=1,\dots,m+n-1}\frac{\mid F_{i}( {\boldsymbol{\theta}}^*)\mid}{v_{i,i}}+\frac{\mid F_{m+n}( {\boldsymbol{\theta}}^*)\mid}{v_{m+n,m+n}}\\
&\leq&\frac{(\log n)^{1/2}}{n^{1/2}}\left(c_{11}e^{6\parallel{\boldsymbol{\theta}}^{*}\parallel_{\infty}}+c_{12}
e^{2\parallel{\boldsymbol{\theta}}^{*}\parallel_{\infty}}\right),
\end{array}
\]
where $c_{11},c_{12}$ are constants.
\end{proof}

The following lemma assures that condition (\ref{B1}) holds with a large probability.
\begin{lemma}\label{lemma 3}
With probability at least $1-4/n$, we have
\[\max\{\max_{i=1,\dots,m}|d_i-\mathbb{E}(d_i)|,\max_{j=1,\dots,n}|b_j-\mathbb{E}(b_j)|\}\leq \sqrt{n\log(n)}.\]
\end{lemma}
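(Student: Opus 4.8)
The plan is to treat each degree as a sum of independent, uniformly bounded random variables and apply a Hoeffding-type concentration inequality to a single degree, then combine the individual tail bounds by a union bound over all $m+n$ degrees. The essential structural fact, already established in Section~\ref{section 2}, is that the entries $x_{i,j}$ are mutually independent Bernoulli variables, so $d_i=\sum_{j=1}^{n}x_{i,j}$ is a sum of $n$ independent variables taking values in $[0,1]$, and $b_j=\sum_{i=1}^{m}x_{i,j}$ is a sum of $m\le n$ such variables. Note that the bound does not rely on the growth condition $\|{\boldsymbol{\theta}}^*\|_\infty\le\tau\log n$; only independence and the range $[0,1]$ of the summands are used, since these are exactly the hypotheses of Hoeffding's inequality.

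First I would invoke Hoeffding's inequality in the form: if $S=\sum_{k=1}^{N}Y_k$ with the $Y_k$ independent and supported in $[0,1]$, then $\mathbb{P}(|S-\mathbb{E}(S)|\ge t)\le 2\exp(-2t^2/N)$. Applying this to $d_i$ with $N=n$ and to $b_j$ with $N=m$, and fixing the threshold $t=\sqrt{n\log n}$, the exponent becomes $2t^2/N=2n\log n/N\ge 2\log n$ in both cases, where for the $b_j$ bound the inequality $m\le n$ (guaranteed by the modeling assumption $m\le n$) is precisely what allows the common threshold to be used. This yields
\[
\mathbb{P}\big(|d_i-\mathbb{E}(d_i)|\ge \sqrt{n\log n}\big)\le 2n^{-2},
\qquad
\mathbb{P}\big(|b_j-\mathbb{E}(b_j)|\ge \sqrt{n\log n}\big)\le 2n^{-2}.
\]

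Finally I would close with a union bound over the $m$ event degrees $d_i$ and the $n$ actor degrees $b_j$. The probability that the maximum deviation exceeds $\sqrt{n\log n}$ is at most $m\cdot 2n^{-2}+n\cdot 2n^{-2}\le 2n^{-1}+2n^{-1}=4/n$, where I again use $m\le n$ to absorb the count of event degrees via $2m/n^2\le 2/n$; taking complements gives the stated probability at least $1-4/n$. There is no genuine obstacle here: the argument is routine. The only points requiring mild care are the bookkeeping in the union bound so as to land exactly on the constant $4/n$ promised in the statement, and the twofold use of $m\le n$, once to control the $b_j$ tail with the single threshold $\sqrt{n\log n}$ and once to absorb the number $m$ of event degrees into the $1/n$ rate.
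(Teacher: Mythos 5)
Your proof is correct and follows essentially the same route as the paper: Hoeffding's inequality applied to each degree, with threshold $\sqrt{n\log n}$, followed by a union bound over all $m+n$ degrees and the assumption $m\le n$ to reach the constant $4/n$. If anything, your version is slightly more careful than the paper's, since you correctly use $N=n$ summands for $d_i$ and $N=m$ for $b_j$, whereas the paper's displayed exponent $2n\log n/m$ for $d_i$ is a small slip that happens not to affect the final bound $2/n^2$.
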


\begin{proof}
Note that $m<n$.
By \cite{hoeffding1963probability}'s  inequality, we have
\[
P\left(|d_i-\mathbb{E}(d_i)|\geq\sqrt{n\log{n}}\right)\leq 2\exp{\{-\frac{2n\log{n}}{m}\}}\leq\frac{2}{n^{2n/m}} \leq\frac{2}{n^2}.
\]
Therefore,
\[
\setlength\arraycolsep{0pt}
\begin{array}{rcl}
P\left(\displaystyle\max_{i}|d_i-\mathbb{E}(d_i)|\geq\sqrt{n\log{n}}\right)&\leq&P\left(\displaystyle
\bigcup_{i}|d_i-\mathbb{E}(d_i)|\geq\sqrt{n\log{n}}\right)\\
&\leq&\displaystyle\sum_{i=1}^{m}P\left(|d_i-\mathbb{E}(d_i)|\geq\sqrt{n\log{n}}\right)\\
&\leq&m\times\frac{2}{n^2}\leq\frac{2}{n}.
\end{array}
\]
Similarly, we have
\[P\left(\displaystyle\max_{j}|b_j-\mathbb{E}(b_j)|\geq\sqrt{n\log{n}}\right)\leq\frac{2}{n}.\]
Consequently,
\[\setlength\arraycolsep{0pt}
\small
\begin{array}{rcl}
&~&P\left(\max\{\max_{i=1,\dots,m}|d_i-\mathbb{E}(d_i)|,\max_{j=1,\dots,n}|b_j-\mathbb{E}(b_j)|\}\geq \sqrt{n\log{n}}\right)\\
&\leq&P\left(\displaystyle\max_{i}|d_i-\mathbb{E}(d_i)|\geq \sqrt{n\log{n}}\right)+P\left(\displaystyle\max_{j}|b_j-\mathbb{E}(b_j)|\geq \sqrt{n\log{n}}\right)\\
&\leq&\frac{4}{n}.
\end{array}\]
This is equivalent to Lemma \ref{lemma 3}.
\end{proof}

\begin{proof}[\bf Proof of Theorem \ref{Theorem 1}]
Assume that condition (\ref{B1}) holds.
Recall that the Newton's iterates in Lemma \ref{Theorem:newton},
$ {\boldsymbol{\theta}}^{(k+1)}=[F^{'}( {\boldsymbol{\theta}}^{(k)})]^{-1}F( {\boldsymbol{\theta}}^{(k)})$ with $ {\boldsymbol{\theta}}^{(0)}= {\boldsymbol{\theta}}^{*}.$ If $ {\boldsymbol{\theta}}\in\Omega( {\boldsymbol{\theta}}^{*},2r)$, then $-F^{'}({ {\boldsymbol{\theta}}}^{*})\in {\cal L}_{m,n}(q,Q)$ with
\[
q=\frac{e^{2\parallel{{\boldsymbol{\theta}}^{*}}\parallel_{\infty}+2r}}{(1+e^{2\parallel{{\boldsymbol{\theta}}^{*}}\parallel_{\infty}+2r})^2},
~~Q=\frac{1}{4}.
\]
By Lemma \ref{lemma 2} and condition(\ref{B1}), for sufficient small $r$,
\[
\setlength\arraycolsep{2pt}
\begin{array}{rcl}
\rho r&\leq&\left[\frac{c_1(m+n-1)Q^2n}{2q^3mn}+\frac{n}{2nq}\right]\times\frac{(\log n)^{1/2}}{n^{1/2}}\left(c_{11}e^{6\parallel{\boldsymbol{\theta}}^{*}\parallel_{\infty}}+c_{12}
e^{2\parallel{\boldsymbol{\theta}}^{*}\parallel_{\infty}}\right)\\
&\leq&O(\frac{(\log n)^{1/2}e^{12\parallel{\boldsymbol{\theta}}^{*}\parallel_{\infty}}}{n^{1/2}})+O(\frac{(\log n)^{1/2}e^{8\parallel{\boldsymbol{\theta}}^{*}\parallel_{\infty}}}{n^{1/2}}).
\end{array}
\]
If ${{\boldsymbol{\theta}}}^*\in \mathbb{R}^{m+n-1}$ with $\| {{\boldsymbol{\theta}}}^*\|_{\infty}\leq \tau\log n$, where $0<\tau<1/24$ is a constant, then as $n\rightarrow\infty$,
\[
(\log n)^{1/2}n^{-1/2}e^{12\parallel{\boldsymbol{\theta}}^{*}\parallel_{\infty}}\leq (\log n)^{1/2}n^{-1/2+12\tau}\rightarrow 0.
\]
Therefore, $\rho r\rightarrow 0$ as $n\rightarrow\infty$.
By Lemma \ref{Theorem:newton}, $\lim_{n\rightarrow\infty}{\widehat{{\boldsymbol{\theta}}}}^{(n)}$ exists. Denote the limit as $\widehat{\boldsymbol{\theta}}$.
Then it satisfies
\[
\|{\widehat{{\boldsymbol{\theta}}}}- {\boldsymbol{\theta}}^*\|_{\infty}\leq2r=O(\frac{(\log n)^{1/2}e^{6\parallel{\boldsymbol{\theta}}^{*}\parallel_{\infty}}}{n^{1/2}})=o(1).
\]
By Lemma \ref{lemma 3}, condition (\ref{B1}) holds with probability approaching to one, thus the above inequality also holds with probability approaching to one. The uniqueness of the MLE comes from Proposition 5 in
\cite{yan2016asymptotics}.
\end{proof}

\subsection*{Appendix B: Proofs for Theorem \ref{Theorem:central}}
\renewcommand\theequation{C\arabic{equation}}
\setcounter{equation}{0}  
We first present one proposition.
Since $d_i=\sum^{}_{k}a_{i,k}$ and $b_j=\sum^{}_{k}a_{k,j}$ are sums of $m$ and $n$ independent random variables,
by the central limit theorem for the bounded case in Lo\`{e}ve (\cite{Loeve(1977)}, page 289), we know that ${v_{i,i}}^{-1/2}(d_i-\mathbb{E}(d_i))$ and ${v_{m+j,m+j}}^{-1/2}(b_j-\mathbb{E}(b_j))$ are asymptotically standard normal if $v_{i,i}$ and $v_{m+j,m+j}$ diverge, respectively.
Note that
\[
\frac{m e^{2\parallel{{\boldsymbol{\theta}}^{*}}\parallel_{\infty}}}{(1+e^{2\parallel {{\boldsymbol{\theta}}}^*\parallel_{\infty}})^2}\leq v_{i,i}\leq \frac{m}{4},~~\frac{n e^{2\parallel{{\boldsymbol{\theta}}^{*}}\parallel_{\infty}}}{(1+e^{2\parallel {{\boldsymbol{\theta}}}^*\parallel_{\infty}})^2}\leq v_{m+j,m+j}\leq \frac{n}{4}.
\]
Then we have the following proposition.

\begin{proposition}\label{proposition 2}
Assume that $X\sim \mathbb{P}_{{{\boldsymbol{\theta}}}^*}$.
If $e^{\parallel {{\boldsymbol{\theta}}}^*\parallel_{\infty}}=o(n^{1/2})$, then for any fixed $k\geq 1$, as $n\rightarrow \infty$, the vector consisting of the first $k$ elements of $S\{\mathbf{g}-\mathbb{E}(\mathbf{g})\}$ is asymptotically multivariate normal with mean zero and covariance matrix given by the upper left $k\times k$ block of $S$.
\end{proposition}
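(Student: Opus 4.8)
The plan is to reduce the statement to a scalar central limit theorem by exploiting the explicit row structure of $S$. Since $k$ is fixed and $m\to\infty$, for all large $n$ the first $k$ coordinates are $\alpha$-coordinates, so fix $i\in\{1,\dots,k\}$ with $i\le m$. Writing $\mathbf{w}=\mathbf{g}-\mathbb{E}(\mathbf{g})$ and using the definition \eqref{eq:Smatrix} of $S$ together with the identity $\sum_{j=1}^{m}w_j-\sum_{j=m+1}^{m+n-1}w_j=b_n-\mathbb{E}(b_n)$ (the extended coordinate $w_{m+n}$ of Lemma \ref{lemma:appro1}), the $i$-th entry collapses to
\[
[S\mathbf{w}]_i=\frac{d_i-\mathbb{E}(d_i)}{v_{i,i}}+\frac{b_n-\mathbb{E}(b_n)}{v_{m+n,m+n}}.
\]
First I would introduce the unit-variance vector $Z_n=(\tilde d_1,\dots,\tilde d_k,\tilde b_n)^\top$, where $\tilde d_i=(d_i-\mathbb{E}(d_i))/v_{i,i}^{1/2}$ and $\tilde b_n=(b_n-\mathbb{E}(b_n))/v_{m+n,m+n}^{1/2}$, and observe that $([S\mathbf{w}]_1,\dots,[S\mathbf{w}]_k)^\top=M_nZ_n$, where $M_n$ is the $k\times(k+1)$ matrix with $(M_n)_{i,i}=v_{i,i}^{-1/2}$ and $(M_n)_{i,k+1}=v_{m+n,m+n}^{-1/2}$. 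A direct computation gives the exact algebraic identity $M_nM_n^\top=S_{(k)}$, the upper-left $k\times k$ block of $S$; this is the structural fact that pins down the limiting covariance.

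The second step is to prove $Z_n\xrightarrow{d}N(0,I_{k+1})$. Each $\tilde d_i$ and $\tilde b_n$ is a normalized sum of bounded independent Bernoulli variables, and $v_{i,i}\ge \tfrac14 n\,e^{-2\|\boldsymbol{\theta}^*\|_\infty}$, $v_{m+n,m+n}\ge\tfrac14 m\,e^{-2\|\boldsymbol{\theta}^*\|_\infty}$ both diverge precisely because $e^{\|\boldsymbol{\theta}^*\|_\infty}=o(n^{1/2})$ (with $m\to\infty$); so each coordinate is asymptotically standard normal by the bounded-case central limit theorem of Lo\`eve \cite{Loeve(1977)}. The distinct row sums $\tilde d_1,\dots,\tilde d_k$ are mutually independent, while $\mathrm{Cov}(\tilde d_i,\tilde b_n)=\sigma_{in}^2/(v_{i,i}^{1/2}v_{m+n,m+n}^{1/2})\to 0$, with $\sigma_{in}^2=\mathrm{Var}(x_{in})$, since row $i$ and column $n$ overlap only in the single edge $(i,n)$; hence $\mathrm{Cov}(Z_n)\to I_{k+1}$. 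Joint asymptotic normality then follows from the Cram\'er--Wold device: any fixed linear combination $\mathbf{a}^\top Z_n$ is again a weighted sum of independent centered bounded variables whose individual coefficients are $O(v_{i,i}^{-1/2})+O(v_{m+n,m+n}^{-1/2})\to 0$ while its variance tends to $\|\mathbf{a}\|_2^2>0$, so the Lindeberg condition holds trivially and $\mathbf{a}^\top Z_n\xrightarrow{d}N(0,\|\mathbf{a}\|_2^2)$.

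Finally I would transfer the limit through the linear map $M_n$. Because $Z_n$ is normalized to unit variances, its covariance error is \emph{absolute}: $\|\mathrm{Cov}(Z_n)-I_{k+1}\|\to0$. Hence for fixed $\mathbf{c}\in\R^k$, writing $u=M_n^\top\mathbf{c}$, one has $\mathrm{Var}(\mathbf{c}^\top M_nZ_n)=u^\top\mathrm{Cov}(Z_n)u=\mathbf{c}^\top S_{(k)}\mathbf{c}\,(1+o(1))$ by $M_nM_n^\top=S_{(k)}$; combined with $\mathbf{c}^\top M_nZ_n=u^\top Z_n\xrightarrow{d}N(0,\|u\|_2^2)$ and $\|u\|_2^2=\mathbf{c}^\top S_{(k)}\mathbf{c}$, this yields $(\mathbf{c}^\top M_nZ_n)/(\mathbf{c}^\top S_{(k)}\mathbf{c})^{1/2}\xrightarrow{d}N(0,1)$. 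Equivalently $(S_{(k)})^{-1/2}([S\mathbf{w}]_1,\dots,[S\mathbf{w}]_k)^\top\xrightarrow{d}N(0,I_k)$, which is the asserted asymptotic normality with covariance $S_{(k)}$.

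I anticipate that the main obstacle is the covariance bookkeeping rather than the CLT itself: a naive attempt to compute $\mathrm{Cov}([S\mathbf{w}]_i,[S\mathbf{w}]_j)$ directly and compare it \emph{relatively} to the entries of $S_{(k)}$ runs into error terms of order $e^{c\|\boldsymbol{\theta}^*\|_\infty}/m$ whose vanishing is not transparent under $e^{\|\boldsymbol{\theta}^*\|_\infty}=o(n^{1/2})$ alone. The device that removes this difficulty is to factor out the scalar standardizations first (the vector $Z_n$), so that the only covariance statement needed is the harmless $\mathrm{Cov}(Z_n)\to I_{k+1}$, and then to rely on the exact identity $M_nM_n^\top=S_{(k)}$ for the limiting covariance. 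The remaining point requiring care is the divergence of $v_{m+n,m+n}$, i.e.\ that $m\to\infty$ with $e^{\|\boldsymbol{\theta}^*\|_\infty}=o(m^{1/2})$, which is where the hypothesis is used for the $b_n$ contribution.
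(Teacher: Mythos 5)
Your proposal is correct and takes essentially the same route as the paper: the paper's (very terse) proof likewise uses the explicit structure of $S$ to reduce each coordinate $[S\{\mathbf{g}-\mathbb{E}(\mathbf{g})\}]_i=(d_i-\mathbb{E}(d_i))/v_{i,i}+(b_n-\mathbb{E}(b_n))/v_{m+n,m+n}$ to sums of bounded independent Bernoulli variables, invoking Lo\`{e}ve's bounded-case CLT once the hypothesis $e^{\|\boldsymbol{\theta}^*\|_\infty}=o(n^{1/2})$ (together with $m/n=O(1)$) forces $v_{i,i}$ and $v_{m+n,m+n}$ to diverge. The only difference is that the paper leaves the joint normality and covariance bookkeeping implicit, whereas you make them explicit via the identity $M_nM_n^\top=S_{(k)}$ and the Cram\'er--Wold/Lindeberg argument.
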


To complete the proof of Theorem \ref{Theorem:central}, we need two lemmas as follows.
\begin{lemma}\label{lemma 6}
Let $R=V^{-1}-S$ and $U=Cov[R\{\mathbf{g}-\mathbb{E}\mathbf{g}\}]$. Then
\begin{equation*}
\| U \| \leq \| V^{-1}-S\|+\frac{3(1+e^{2\parallel{{\boldsymbol{\theta}}^{*}}\parallel_{\infty}})^4}{4mne^{4\parallel
{{\boldsymbol{\theta}}^{*}}\parallel_{\infty}}}.
\end{equation*}
\end{lemma}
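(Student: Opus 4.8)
The plan is to exploit the fact that, because $\mathbf{g}$ is the vector of natural sufficient statistics of the exponential family \eqref{2.1}, its covariance matrix equals the Fisher information matrix: $\mathrm{Cov}(\mathbf{g}) = V$. Indeed $\mathrm{Var}(d_i) = \sum_{j} p_{i,j}(1-p_{i,j}) = v_{i,i}$ with $p_{i,j} = e^{\alpha_i+\beta_j}/(1+e^{\alpha_i+\beta_j})$, while $\mathrm{Cov}(d_i, b_j) = \mathrm{Var}(x_{i,j}) = v_{i,m+j}$ since $x_{i,j}$ is their only shared summand, and degrees lying on the same side of the bipartition are independent so that all remaining entries vanish; this reproduces exactly the block pattern of $\mathcal{L}_{m,n}(q,Q)$. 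Because $R = V^{-1} - S$ is symmetric (both $V^{-1}$ and $S$ are), the covariance of the linear image is
\[
U = R\,\mathrm{Cov}(\mathbf{g})\,R^{\top} = R V R.
\]

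Next I would collapse $RVR$ using the objects already constructed in the proof of Lemma \ref{lemma:appro}. Writing $F = V^{-1} - S = R$ and recalling $W = S(I - VS)$ from that proof, observe that $VR = V(V^{-1} - S) = I - VS$, so
\[
U = R V R = R\,(I - VS) = F\,(I - VS).
\]
The recursion \eqref{recursion}, which reads $F = F(I - VS) + W$, then gives $F(I - VS) = F - W$, and hence
\[
U = R - W = (V^{-1} - S) - W.
\]

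Finally I would apply the triangle inequality for the max-entry norm, $\|U\| \le \|V^{-1} - S\| + \|W\|$, and invoke the entrywise bound \eqref{wij}, namely $\max(|w_{i,j}|,|w_{i,j}-w_{i,k}|) \le 3Q/(q^2 mn)$, to obtain $\|W\| \le 3Q/(q^2 mn)$. Substituting the constants $Q = Q_* = 1/4$ and $q = q_* = e^{2\|\boldsymbol{\theta}^*\|_{\infty}}/(1+e^{2\|\boldsymbol{\theta}^*\|_{\infty}})^2$ realized by the Fisher matrix at the true parameter turns this into $\|W\| \le 3(1+e^{2\|\boldsymbol{\theta}^*\|_{\infty}})^4/(4mn\,e^{4\|\boldsymbol{\theta}^*\|_{\infty}})$, which is precisely the asserted second term. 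The one genuinely substantive point---and the step I would verify most carefully---is the clean identity $\mathrm{Cov}(\mathbf{g}) = V$ together with the order-sensitive algebra $VR = I - VS$ that lets the recursion \eqref{recursion} cancel the quadratic term exactly; once these are secured, the bound follows immediately from facts already established.
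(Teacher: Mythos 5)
Your proposal is correct and follows essentially the same route as the paper: both reduce $U=\mathrm{Cov}[R\{\mathbf{g}-\mathbb{E}\mathbf{g}\}]$ to $RVR$ using $\mathrm{Cov}(\mathbf{g})=V$ and the symmetry of $R$, arrive at the identity $U=(V^{-1}-S)-S(I-VS)$, and finish with the triangle inequality plus the entrywise bound \eqref{wij} evaluated at $Q_*=1/4$ and $q_*=e^{2\|\boldsymbol{\theta}^*\|_\infty}/(1+e^{2\|\boldsymbol{\theta}^*\|_\infty})^2$. The only cosmetic difference is that you obtain the identity by invoking the recursion \eqref{recursion}, whereas the paper expands $(V^{-1}-S)V(V^{-1}-S)^{\top}$ directly---the underlying algebra is the same.
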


\begin{proof}
Note that
\[U=RVR^T=(V^{-1}-S)V(V^{-1}-S)^T=(V^{-1}-S)-S(I-VS),\]
where $I$ is a $(m+n-1)\times(m+n-1)$ diagonal matrix, and by \eqref{wij}, we have
\[\mid \{S(I-VS)\}_{i,j}\mid=\mid w_{i,j}\mid\leq\frac{3(1+e^{2\parallel{{\boldsymbol{\theta}}^{*}}\parallel_{\infty}})^4}{4mne^{4\parallel{{\boldsymbol
{\theta}}^{*}}\parallel_{\infty}}}.\]
Thus,
\[\parallel U \parallel\leq\parallel V^{-1}-S\parallel+\parallel \{S(I_{m+n-1}-VS)\}\parallel\leq\parallel V^{-1}-S\parallel+\frac{3(1+e^{2\parallel{{\boldsymbol{\theta}}^{*}}\parallel_{\infty}})^4}{4mne^{4\parallel
{{\boldsymbol{\theta}}^{*}}\parallel_{\infty}}}.\]
\end{proof}

\begin{lemma}\label{Lemma 7}
Assume that the conditions in Theorem \ref{Theorem 1} hold.
If $\| {{\boldsymbol{\theta}}}^*\|_{\infty}\leq \tau\log n$ with $\tau<1/24$ and $m/n=O(1)$, then for any $i$,
\[
\widehat{\boldsymbol{\theta}}_{i}- {\boldsymbol{\theta}}^{*}_{i}=[V^{-1}\{\mathbf{g}-\mathbb{E}\mathbf{g}\}]_{i}+o_p(n^{-1/2}).
\]
\end{lemma}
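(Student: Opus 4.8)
My plan is to turn the likelihood equations into an \emph{exact} linear identity for $\widehat{\boldsymbol{\theta}}-{\boldsymbol{\theta}}^*$ and then show, coordinate by coordinate, that this identity differs from $V^{-1}\{\mathbf{g}-\mathbb{E}\mathbf{g}\}$ only by a $o_p(n^{-1/2})$ term. Everything takes place on the event---of probability tending to one by Theorem~\ref{Theorem 1} and Lemma~\ref{lemma 3}---on which the MLE exists, \eqref{B1} holds, and $\|\widehat{\boldsymbol{\theta}}-{\boldsymbol{\theta}}^*\|_{\infty}=O_p((\log n)^{1/2}e^{6\|{\boldsymbol{\theta}}^*\|_{\infty}}n^{-1/2})$. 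Write $\widehat{\bs{\gamma}}=\widehat{\boldsymbol{\theta}}-{\boldsymbol{\theta}}^*$. The equations \eqref{eq:2.2} state precisely that the observed degrees equal the fitted ones, $\mathbf{g}=\mathbb{E}_{\widehat{\boldsymbol{\theta}}}\mathbf{g}$. Since the Jacobian of the mean map ${\boldsymbol{\theta}}\mapsto\mathbb{E}_{\boldsymbol{\theta}}\mathbf{g}$ is exactly the Fisher matrix $V({\boldsymbol{\theta}})$, the integral mean-value theorem yields
\[
\mathbf{g}-\mathbb{E}\mathbf{g}=\mathbb{E}_{\widehat{\boldsymbol{\theta}}}\mathbf{g}-\mathbb{E}_{{\boldsymbol{\theta}}^*}\mathbf{g}=\bar V\,\widehat{\bs{\gamma}},\qquad \bar V:=\int_0^1 V({\boldsymbol{\theta}}^*+t\widehat{\bs{\gamma}})\,dt,
\]
so that $\widehat{\bs{\gamma}}=\bar V^{-1}\{\mathbf{g}-\mathbb{E}\mathbf{g}\}$. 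Because the defining relations \eqref{eq:Lmatrix} are linear and the parameters along $[{\boldsymbol{\theta}}^*,\widehat{\boldsymbol{\theta}}]$ have $\ell_\infty$-norm at most $\|{\boldsymbol{\theta}}^*\|_\infty+2r$, the averaged matrix $\bar V$ again lies in $\mathcal{L}_{m,n}(\bar q,\bar Q)$ with $\bar Q=1/4$ and $\bar q\asymp e^{-2\|{\boldsymbol{\theta}}^*\|_\infty}$; hence Lemmas~\ref{lemma:appro} and~\ref{lemma:appro1} apply to $\bar V$ exactly as to $V:=V({\boldsymbol{\theta}}^*)$.

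Setting $\mathbf{w}:=V^{-1}\{\mathbf{g}-\mathbb{E}\mathbf{g}\}$ and using the resolvent identity $\bar V^{-1}-V^{-1}=\bar V^{-1}(V-\bar V)V^{-1}$, the claim reduces to showing that every coordinate of
\[
\widehat{\bs{\gamma}}-\mathbf{w}=\bar V^{-1}(V-\bar V)\,\mathbf{w}
\]
is $o_p(n^{-1/2})$. I would feed two estimates into this. First, applying Lemma~\ref{lemma:appro1} to $\mathbf{g}-\mathbb{E}\mathbf{g}$, bounding the approximation-error contribution by a variance (Bernstein) argument rather than crudely by $\|\mathbf{g}-\mathbb{E}\mathbf{g}\|_1$, and using $v_{i,i}\gtrsim m\,e^{-2\|{\boldsymbol{\theta}}^*\|_\infty}$, $v_{m+n,m+n}\gtrsim m\,e^{-\|{\boldsymbol{\theta}}^*\|_\infty}$ together with \eqref{B1}, gives $\|\mathbf{w}\|_\infty=O_p((\log n)^{1/2}e^{2\|{\boldsymbol{\theta}}^*\|_\infty}n^{-1/2})$. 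Second, the elementary inequalities $|f''(x)|\le f'(x)$ and $|f'''(x)|\le f'(x)$ for $f(x)=e^x/(1+e^x)$ make every entry of $V({\boldsymbol{\theta}})$ Lipschitz with modulus controlled by the \emph{same} entry of $V$; integrating over the segment, $V-\bar V$ has entries of size $O(\|\widehat{\bs{\gamma}}\|_\infty\,v_{i,j})$. This ``relative Lipschitz'' bound is what makes the scheme close, since the $v$-factors carried by $V-\bar V$ cancel against the $1/v$-denominators produced when $\bar V^{-1}$ is resolved through its approximation $S$.

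It then remains to evaluate $\bar V^{-1}\mathbf{y}$ with $\mathbf{y}:=(V-\bar V)\mathbf{w}$ coordinatewise. Using the explicit form \eqref{eq:Smatrix}, for $i\le m$,
\[
[\bar V^{-1}\mathbf{y}]_i=\frac{y_i}{\bar v_{i,i}}+\frac{y_{m+n}}{\bar v_{m+n,m+n}}+\big[(\bar V^{-1}-S)\mathbf{y}\big]_i,\qquad y_{m+n}=\sum_{i\le m}y_i-\sum_{m<i\le m+n-1}y_i,
\]
with the sign of the second term reversed for $i>m$. The diagonal term is $O_p(\|\widehat{\bs{\gamma}}\|_\infty\|\mathbf{w}\|_\infty)$, which the two rates above render $o_p(n^{-1/2})$ once $\tau$ is small. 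The genuine obstacle is twofold. The auxiliary coordinate $y_{m+n}$ must \emph{not} be bounded by $\sum_i|y_i|$, which is far too lossy; instead one exploits that $V-\bar V$ inherits the column-sum relations of $\mathcal{L}_{m,n}$, so that its signed column sums collapse to a single boundary entry of order $\|\widehat{\bs{\gamma}}\|_\infty v_{l,m+n}$, giving $|y_{m+n}|\lesssim \|\widehat{\bs{\gamma}}\|_\infty\|\mathbf{w}\|_\infty\,\bar v_{m+n,m+n}$ and hence $y_{m+n}/\bar v_{m+n,m+n}=O_p(\|\widehat{\bs{\gamma}}\|_\infty\|\mathbf{w}\|_\infty)$. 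The remaining term $(\bar V^{-1}-S)\mathbf{y}$, controlled through $\|\bar V^{-1}-S\|\le c_1\bar Q^2/(\bar q^3 mn)$ from Lemma~\ref{lemma:appro}, carries the largest power of $e^{\|{\boldsymbol{\theta}}^*\|_\infty}$ and so pins down how small $\tau$ must be; substituting the rates for $\|\mathbf{w}\|_\infty$, $\|\widehat{\bs{\gamma}}\|_\infty$ and $\|\mathbf{y}\|_1=O(\|\widehat{\bs{\gamma}}\|_\infty\|\mathbf{w}\|_\infty\,mn)$, every contribution takes the form $(\log n)^{a}e^{c\|{\boldsymbol{\theta}}^*\|_\infty}/n$, which the hypothesis $\|{\boldsymbol{\theta}}^*\|_\infty\le\tau\log n$ (for $\tau$ in the stated range) renders $o_p(n^{-1/2})$, completing the proof.
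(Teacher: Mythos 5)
Your proposal is correct in substance, but it is organized quite differently from the paper's proof, so a comparison is worth recording. The paper never forms the path-averaged matrix $\bar V$: it Taylor-expands each success probability to second order around $\boldsymbol{\theta}^*$, obtaining the exact identity $\mathbf{g}-\mathbb{E}\mathbf{g}=V(\widehat{\boldsymbol{\theta}}-\boldsymbol{\theta}^*)+\mathbf{h}$ with $|h_{i,j}|\le \widehat\gamma_{i,j}^2/2$, and then bounds $V^{-1}\mathbf{h}$ coordinatewise by splitting $V^{-1}=S+R$, invoking Lemma \ref{lemma:appro} for $R$ and the cancellation $h_{m+n}=\sum_{i}h_{i,n}$ for the auxiliary coordinate. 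Your route is the algebraic mirror image of this: since $\mathbf{h}=(\bar V-V)\widehat{\boldsymbol{\gamma}}$, your resolvent term $\bar V^{-1}(V-\bar V)\mathbf{w}$ and the paper's $V^{-1}\mathbf{h}$ are literally the same vector, namely $(\bar V^{-1}-V^{-1})(\mathbf{g}-\mathbb{E}\mathbf{g})$. The difference lies in what must be estimated. The paper's remainder $\mathbf{h}$ is quadratically small by Taylor's theorem alone, so it needs neither the class membership of $\bar V$, nor any Lipschitz property of the entries of $V$, nor a separate bound on $\|\mathbf{w}\|_\infty$; your scheme needs all three. Those three structural claims do hold: $\bar V\in{\cal L}_{m,n}(\bar q,1/4)$ with $\bar q\asymp e^{-2\|\boldsymbol{\theta}^*\|_\infty}$ because the defining equalities in \eqref{eq:Lmatrix} are linear and hence preserved by averaging along the segment; $|(V-\bar V)_{l,k}|\lesssim\|\widehat{\boldsymbol{\gamma}}\|_\infty v_{l,k}$ because $|f''|\le f'$ for $f(x)=e^x/(1+e^x)$, so $\log f'$ is $1$-Lipschitz and the entries move only by bounded relative factors; and the signed column sums of $V-\bar V$ collapse exactly as you claim (to the single boundary entry $(V-\bar V)_{k,m+n}$ for a column $k\le m$, and to $0$ for $k>m$), which is the analogue of the paper's identity $h_{m+n}=\sum_i h_{i,n}$. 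So your argument closes; what the paper's version buys is economy of means, while yours makes transparent that the error is a resolvent difference driven by the relative perturbation of the Fisher matrix.

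Two caveats. First, your rate $\|\mathbf{w}\|_\infty=O_p((\log n)^{1/2}e^{2\|\boldsymbol{\theta}^*\|_\infty}n^{-1/2})$ genuinely requires exponential concentration: each coordinate of $R\{\mathbf{g}-\mathbb{E}\mathbf{g}\}$ is a weighted sum of the independent $x_{i,j}$ with weights of size $O(e^{6\|\boldsymbol{\theta}^*\|_\infty}/(mn))$, so Bernstein plus a union bound over the $m+n-1$ coordinates gives a maximum of order $e^{3\|\boldsymbol{\theta}^*\|_\infty}(\log n)^{1/2}/n$, which is negligible; Chebyshev alone (as used in the paper's proof of Theorem \ref{Theorem:central}) controls one coordinate but not the maximum at this rate, so this step should be spelled out, or one can simply fall back on $\|\mathbf{w}\|_\infty\le r=O_p((\log n)^{1/2}e^{6\|\boldsymbol{\theta}^*\|_\infty}n^{-1/2})$ from Lemma \ref{lemma 2}. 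Second, your dominant term $\|\bar V^{-1}-\bar S\|\,\|\mathbf{y}\|_1\lesssim e^{6\|\boldsymbol{\theta}^*\|_\infty}\|\widehat{\boldsymbol{\gamma}}\|_\infty\|\mathbf{w}\|_\infty$ is of order $e^{14\|\boldsymbol{\theta}^*\|_\infty}\log n/n$ (refined $\|\mathbf{w}\|_\infty$) or $e^{18\|\boldsymbol{\theta}^*\|_\infty}\log n/n$ (crude), hence $o(n^{-1/2})$ only for $\tau<1/28$, respectively $\tau<1/36$, and not under the stated $\tau<1/24$. You are in good company here: the paper's own proof has the identical defect, since its term $(R\mathbf{h})_i=O(e^{18\|\boldsymbol{\theta}^*\|_\infty}\log n/n)$ forces it to invoke $\tau<1/36$ in its final line, contradicting the lemma's hypothesis; the lemma is only ever applied inside Theorem \ref{Theorem:central}, where $\tau<1/36$ is assumed, so the mismatch is inherited from the paper rather than introduced by you.
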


\begin{proof}
By Theorem \ref{Theorem 1}, if $m/n=O(1)$, then we have
\[
\widehat{\rho}_n:=\max_{1\leq i\leq m+n-1}\mid\widehat{\boldsymbol{\theta}}_{i}- {\boldsymbol{\theta}}^{*}_{i}\mid=O_p(\frac{(\log n)^{1/2}e^{6\parallel{\boldsymbol{\theta}}^{*}\parallel_{\infty}}}{n^{1/2}}).
\]
Let $\widehat\gamma_{i,j}=\widehat{\alpha}_{i}+\widehat{\beta}_{j}- \alpha_{i}^{*}- \beta_{j}^{*}$.
By the Taylor expansion, for any $1\leq i \leq m,1\leq j \leq n, i\neq j$,
\[
\frac{e^{\widehat{\alpha}_{i}+\widehat{\beta}_{j}}}{1+e^{\widehat{\alpha}_{i}+\widehat{\beta}_{j}}}
-\frac{e^{{ \alpha}_{i}^*+{\beta}_{j}^*}}{1+e^{{ \alpha}_{i}^*+
 {\beta}_{j}^*}}=\frac{e^{{ \alpha}_{i}^*+ {\beta}_{j}^*}}{(1+e^{{ \alpha}_{i}^*+ {\beta}_{j}^*})^2}\hat\gamma_{i,j}+h_{i,j},
\]
where
\[
h_{ij}=\frac{e^{ {\alpha}_{i}^*+ {\beta}_{j}^*+\phi_{i,j}\widehat\gamma_{i,j}}(1-e^{ {\alpha}_{i}^*+ {\beta}_{j}^*+\phi_{i,j}\widehat\gamma_{i,j}})}{2(e^{ {\alpha}_{i}^*+ {\beta}_{j}^*+\phi_{i,j}\widehat\gamma_{i,j}})^3}\widehat\gamma_{i,j}^2,\]
and $0\leq\phi_{i,j}\leq 1$. By the likehood equations (\ref{eq:2.2}), it is not difficult to verify that
\[
\mathbf{g}-\mathbb{E}\mathbf{g}=V(\widehat{\boldsymbol{\theta}}- {\boldsymbol{\theta}}^*)+\mathbf{h},
\]
where $\mathbf{h}=(h_1,\dots,h_{m+n-1})^T$ and
\[
h_i=\sum_{k=1}^{n}h_{i,k}, i=1, \dots, m, ~~ h_{m+i}=\sum_{k=1}^{m}h_{k,i}, i=1, \dots, n-1.
\]
Equivalently,
\begin{equation}\label{C2}
\widehat{\boldsymbol{\theta}}- {\boldsymbol{\theta}}^*=V^{-1}(\mathbf{g}-\mathbb{E}\mathbf{g})+V^{-1}\mathbf{h}.
\end{equation}
By \eqref{B2}, it is easy to show
\[
\mid h_{i,j}\mid\leq\mid\widehat\gamma_{i,j}^2/2 \mid\leq2\widehat{\rho}_n^2,~~~~\mid h_{i}\mid\leq\sum_{i, j}\mid h_{i,j}\mid
\leq 2n\widehat{\rho}_n^2.
\]
Because
\[
h_{m+n}=\sum_{i=1}^{m}h_i-\sum_{j=1}^{n-1}h_{m+j}=\sum_{i=1}^{m}\sum_{j=1}^{n}h_{i,j}-
\sum_{j=1}^{n-1}\sum_{i=1}^{m}h_{i,j}=\sum_{i=1}^{m}h_{i,n}.
\]
Therefore
\[
\mid h_{m+n}\mid\leq (m+n)\widehat{\rho}_n^2.
\]
Note that $(S\mathbf{h})_i=h_i/v_{i,i}+(-1)^{1_{\{i>m\}}}h_{m+n}/v_{m+n,m+n},$ and $(V^{-1}\mathbf{h})_i=(S\mathbf{h})_i+(R\mathbf{h})_i$. Then we have
\[\mid(S\mathbf{h})_i\mid\leq\frac{\mid h_i\mid}{v_{i,i}}+\frac{\mid h_{2n}\mid}{v_{m+n,m+n}}\leq\frac{{4\widehat{\rho}_n^2\cdot(1+e^{2\parallel{{\boldsymbol
{\theta}}^{*}}\parallel_{\infty}})^2}}{e^{2\parallel{{\boldsymbol{\theta}}^{*}}\parallel_{\infty}}}\leq O \left(\frac{ e^{14\parallel{\boldsymbol{\theta}}^{*}\parallel_{\infty}}\log n}{n}\right),\]
by Lemma \ref{lemma:appro}, we have
\[\mid(R\mathbf{h})_i\mid\leq\| R\|_{\infty}\times[(m+n-1)\max_{i}| h_i|]\leq O \left(\frac{e^{18\parallel{\boldsymbol{\theta}}^{*}\parallel_{\infty}}\log n}{n}\right).\]
If $\parallel {{\boldsymbol{\theta}}}^*\parallel_{\infty}\leq \tau\log n$, and $\tau<1/36$, then $\mid(V^{-1}\mathbf{h})_i\mid\leq\mid(S\mathbf{h})_i\mid+\mid(R\mathbf{h})_i\mid=o(n^{-1/2})$.

\end{proof}

\begin{proof}[\bf Proof of Theorem \ref{Theorem:central}]
By \eqref{C2}, we have
\[
(\widehat{\boldsymbol{\theta}}-{\boldsymbol{\theta}}^{*})_{i}=[S\{\mathbf{g}-\mathbb{E}
\mathbf{g}\}]_i+[R\{\mathbf{g}-\mathbb{E}\mathbf{g}\}]_i+(V^{-1}\mathbf{h})_i.
\]
By Lemma \ref{lemma 6}, we have
\[
\parallel U \parallel\leq\| V^{-1}-S\|+\frac{3(1+e^{2\parallel{{\boldsymbol{\theta}}^{*}}\parallel_{\infty}})^4}{4mne^{4\parallel
{{\boldsymbol{\theta}}^{*}}\parallel_{\infty}}}\leq  O(\frac{e^{6\parallel{\boldsymbol{\theta}}^{*}\parallel_{\infty}}}{mn})+O(\frac{e^{4\parallel
{\boldsymbol{\theta}}^{*}\parallel_{\infty}}}{mn})= O(\frac{e^{6\parallel{\boldsymbol{\theta}}^{*}\parallel_{\infty}}}{mn}).
\]
By Chebyshev's inequality, if $\parallel {{\boldsymbol{\theta}}}^*\parallel_{\infty}\leq \tau\log n$, $m/n=O(1)$, and $\tau<1/36$, then
\[
P\left\{\frac{[R\{\mathbf{g}-\mathbb{E}\mathbf{g}\}]_i}{n^{-1/2}}>\epsilon\right\}\leq\frac{Cov[R\{\mathbf{g}-
\mathbb{E}\mathbf{g}\}]_i}{n\epsilon^2}\leq\frac{1}{n\epsilon^2}O(\frac{e^{6\parallel
{\boldsymbol{\theta}}^{*}\parallel_{\infty}}}{mn})=o(1),
\]
Therefore, we have
\[
[R\{\mathbf{g}-\mathbb{E}\mathbf{g}\}]_i=o_p(n^{-1/2}).
\]
By  Lemma \ref{lemma 6} and Lemma \ref{Lemma 7}, we have
\[
(\widehat{\boldsymbol{\theta}}-{\boldsymbol{\theta}}^{*})_{i}=[S\{\mathbf{g}-\mathbb{E}\mathbf{g}\}]_i+o_p(n^{-1/2}).
\]
Theorem \ref{Theorem:central} follows directly from Proposition \ref{proposition 2}.
\end{proof}

\section*{Acknowledgements}
We are very grateful to one anonymous referee and the Editor for their valuable comments that have greatly improved the manuscript.
Qin's research is partially supported by the National Natural Science Foundation of China (No.11271147, 11471135).
Yan's research is partially supported by the National Natural Science Foundation of China (No.11401239) and the self-determined research funds of CCNU from the colleges's basic research and operation of MOE (CCNU15A02032, CCNU15ZD011) and a fund from KLAS (130026507). Zhang's research is supported by a fund from CCNU (2016CXZZ157).

\bibliography{reference}
\bibliographystyle{apa}

\newpage
\begin{figure}[H]
\centering
\caption{The QQ plots of $\hat{v}_{ii}^{-1/2}(\hat{\alpha}_i-\alpha_i)$ ($m=100, n=200$). }
\label{figure-discrete-qq,alpha}
\includegraphics[height=4in, width=6in, angle=0]{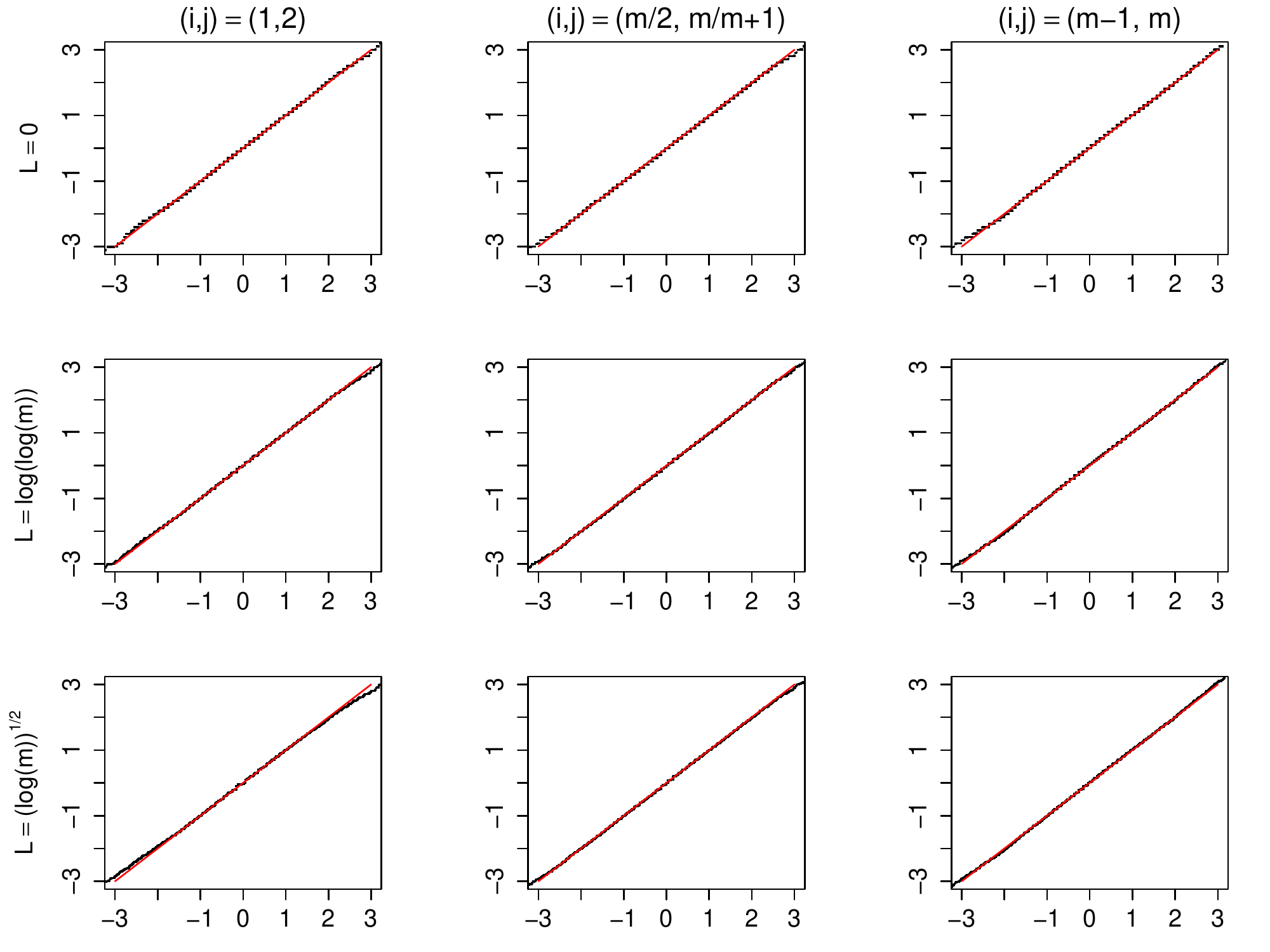}
\end{figure}

\begin{figure}[H]
\centering
\caption{The QQ plots of $\hat{v}_{jj}^{-1/2}(\hat{\beta}_j-\beta_j)$ ($m=100, n=200$). }
\label{figure-discrete-qq,beta}
\includegraphics[height=4in, width=6in, angle=0]{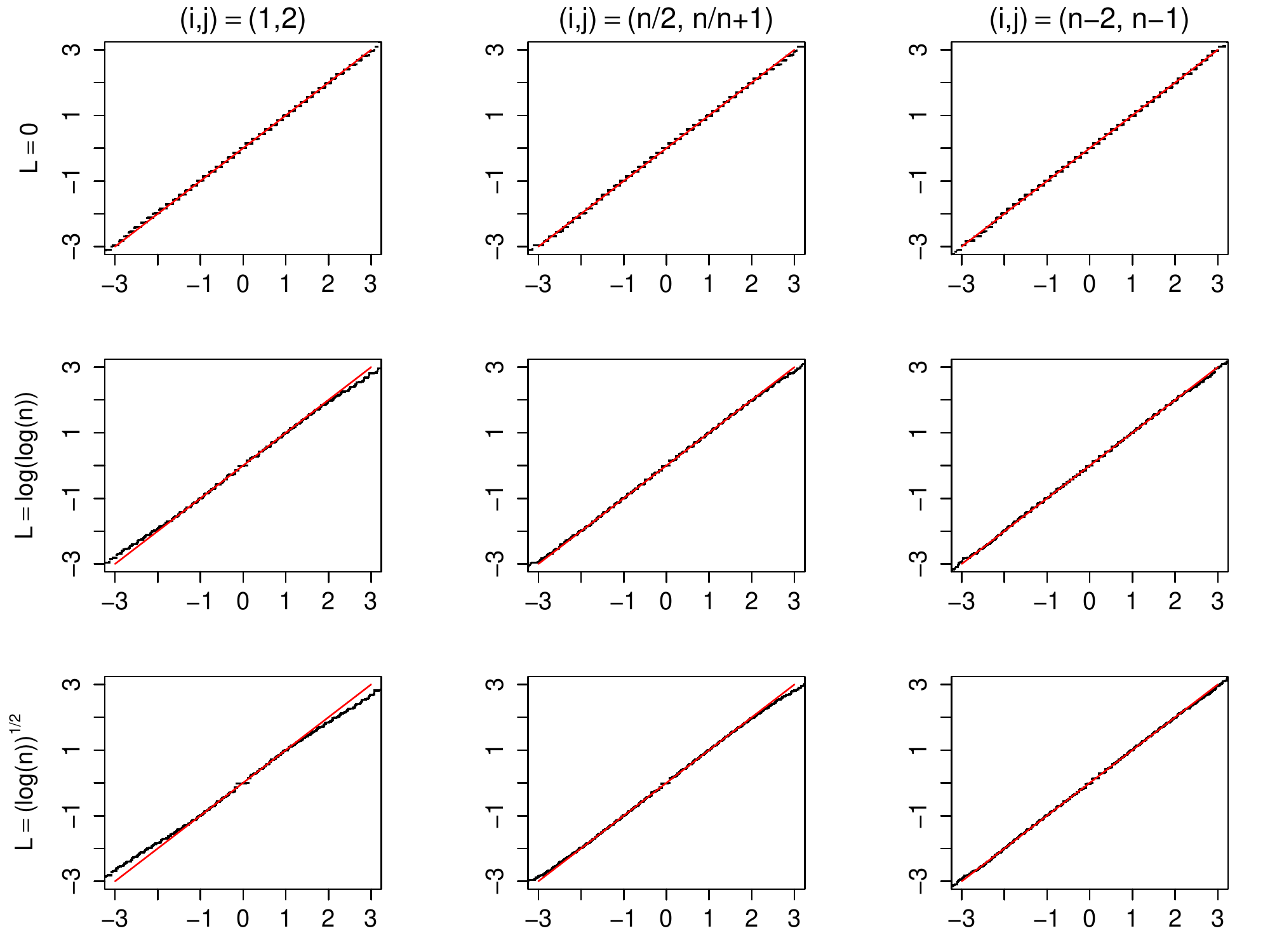}
\end{figure}
\begin{table}[H]\centering
\caption{Estimated coverage probabilities of $\alpha_i-\alpha_j$ for pair $(i,j)$ as well as the length of confidence intervals (in square brackets), and the probabilities that the MLE does not exist (in parentheses), multiplied by $100$.}
\label{Table 1:Estimatation of alpha}
\small
\vskip5pt
\begin{tabular}{ccccccc}
\\
\hline
m       &  $(i,j)$ & $L=0$ & $L=\log m$ & $L=(\log m)^{1/2}$ & $L=\log(m)$ \\
\hline
&&&&&&\\

100         &$(1,2) $&$  94.93[0.40](0)  $&$ 95.17[0.55] (0) $&$ 94.68[0.70](0.43)  $&$ (100)$ \\
            &$(50,51) $&$  95.15[0.40](0)  $&$ 95.02[0.46](0)  $&$ 94.78[0.52](0.43) $&$ (100)$ \\
            &$(99,100)$&$  95.66[0.40](0) $&$ 95.17[0.42] (0)$&$ 94.75[0.44] (0.43) $&$ (100)$ \\
&&&&&&\\
\hline
\end{tabular}
\end{table}
\begin{table}[H]\centering
\caption{Estimated coverage probabilities of $\beta_i-\beta_j$ for pair $(i,j)$ as well as the length of confidence intervals (in square brackets), and the probabilities that the MLE does not exist (in parentheses), multiplied by $100$.}
\label{Table 2:Estimatation of beta}
\small
\vskip5pt
\begin{tabular}{ccccccc}
\\
\hline
n      &  $(i,j)$ & $L=0$ & $L=\log n$ & $L=(\log n)^{1/2}$ & $L=\log(n)$ \\
\hline
&&&&&&\\

200         &$(1,2) $&$  94.77[0.60](0)  $&$ 94.13[0.83] (0) $&$ 94.75[1.17](2.41)  $&$ (100)$ \\
            &$(100,101) $&$  95.41[0.60](0)  $&$ 94.79[0.68](0)  $&$ 94.87[0.77](2.41) $&$ (100)$ \\
            &$(198,199)$&$  96.60[0.60](0) $&$ 95.04[0.60] (0)$&$ 94.86[0.63] (2.41) $&$ (100)$ \\
&&&&&&\\
\hline
\end{tabular}
\end{table}

\newpage
\begin{table}[H]
\vspace{-4em}
\centering
\small
\centering \caption{The Student Extracurricular network data: the estimated influence parameters $\hat{\boldsymbol{\theta}}$, $95\%$ confidence intervals (in square brackets) and their standard errors (in parentheses)}
\vspace{0.5em}
\label{Table 3:estimated influence parameters}
\begin{tabular}{lcc}
\hline
 Student Organizations  &     Degree & $\hat{\alpha}_i$ \\
\hline
Spanish.Club &     $199$ & $-0.32[-0.49,-0.16](0.08)$ \\

  Pep.Club &     $157$ & $-0.64[-0.82,-0.46](0.09)$ \\

       NHS &     $124$ & $-0.94[-1.14,-0.75](0.10)$ \\

Latin.Club &      $93$ & $-1.29[-1.51,-1.07](0.11)$ \\

Orchestra..Symphonic &      $89$ & $-1.34[-1.57,-1.12](0.11)$ \\

  Key.Club &      $76$ & $-1.52[-1.76,-1.28](0.12)$ \\

Spanish.Club..high. &      $68$ & $-1.65[-1.90,-1.40](0.13)$ \\

Drunk.Driving &      $67$ & $-1.67[-1.92,-1.41](0.13)$ \\

Forensics..National.Forensics.League. &      $66$ & $-1.68[-1.94,-1.43](0.13)$ \\

Choir..a.capella &      $65$ & $-1.70[-1.96,-1.44](0.13)$ \\

         $\vdots$ &            $\vdots$&  $\vdots$          \\

Chess.Club &       $7$ & $-4.04[-4.78,-3.29](0.38)$ \\

Volleyball..JV &       $7$ & $-4.04[-4.78,-3.29](0.38)$ \\

Teachers.of.Tomorrow &       $5$ & $-4.38[-5.26,-3.50](0.45)$ \\

Quiz.Bowl..all. &       $5$ & $-4.38[-5.26,-3.50](0.45)$ \\

Cheerleaders..Spirit.Squad &       $5$ & $-4.38[-5.26,-3.50](0.45)$ \\

Drunk.Driving.Officers &       $4$ & $-4.60[-5.58,-3.62](0.50)$ \\

Choir..vocal.ensemble..4.women. &       $4$ & $-4.60[-5.58,-3.62](0.50)$ \\

Choir..barbershop.quartet..4.men. &       $4$ & $-4.60[-5.58,-3.62](0.50)$ \\

Cross.Country..girls.8th &       $3$ & $-4.89[-6.02,-3.76](0.58)$ \\

Swim...Dive.Team..boys &       $3$ & $-4.89[-6.02,-3.76](0.58)$ \\
\hline
\end{tabular}

\end{table}
\newpage
\begin{table}[H]
\vspace{-4em}
\centering
\small
\centering \caption{The Student Extracurricular network data: the estimated influence parameters $\hat{\boldsymbol{\theta}}$, $95\%$ confidence intervals (in square brackets) and their standard errors (in parentheses)}
\vspace{0.5em}
\label{Table 4:estimated influence parameters}
\begin{tabular}{ccc}
\hline
         Student ID &     Degree & $\hat{\beta}_j$ \\
\hline
 $122662$ &      $14$ & $0.996[0.387,1.606](0.311)$ \\

 $114850$ &      $12$ & $0.792[0.146,1.438](0.330)$ \\

 $888947$ &      $12$ & $0.792[0.146,1.438](0.330)$ \\

 $126259$ &      $11$ & $0.680[0.011,1.348](0.341)$ \\

 $139161$ &      $11$ & $0.680[0.011,1.348](0.341)$ \\

 $122638$ &      $10$ & $0.559[-0.136,1.254](0.355)$ \\

 $888981$ &      $10$ & $0.559[-0.136,1.254](0.355)$ \\

 $888988$ &      $10$ & $0.559[-0.136,1.254](0.355)$ \\

 $889059$ &      $10$ & $0.559[-0.136,1.254](0.355)$ \\

 $114037$ &       $9$ & $0.428[-0.298,1.153](0.370)$ \\

 $114671$ &       $9$ & $0.428[-0.298,1.153](0.370)$ \\

 $888892$ &       $9$ & $0.428[-0.298,1.153](0.370)$ \\

 $\vdots$ &  $\vdots$ &  $\vdots$ \\
 $889102$ &       $1$ & $-1.985[-3.974,0.004](1.015)$ \\

 $889103$ &       $1$ & $0.000[-0.844,0.844](0.431)$ \\
\hline
\multicolumn{3}{l}{The full table is available by sending emails to \url{zhang_yong@mails.ccnu.edu.cn}}
\end{tabular}
\end{table}

\end{document}